\newtheorem{theorem}{Theorem}
\newtheorem{corollary}[theorem]{Corollary}
\newenvironment{proof}{{\noindent{\bf Proof:}}}{$\hfill\Box$}
\newcommand{\ket}[1]{|#1\rangle}
\newcommand{\bra}[1]{\langle#1|}
\newcommand{\tr}{\text{tr}}
\def\Re{\text{Re}}
\def\supp{\text{supp}}
\def\bL{{L}}
\def\cH{\mathcal{H}}
\def\cN{\Phi}
\newcommand{\norm}[1]{ \left\Vert #1 \right\Vert}
\begin{document}

\singlespacing

\title{Sandwiched R\'enyi Divergence Satisfies Data Processing Inequality}

\author{Salman Beigi\\ \emph{\small School of Mathematics,} \emph{\small Institute for Research in Fundamental Sciences (IPM)}\\ \emph{\small P.O. Box 19395-5746, Tehran, Iran}}

\begin{abstract}
Sandwiched (quantum) $\alpha$-R\'enyi divergence has been recently defined in the independent works of Wilde et al. (arXiv:1306.1586) and M\"uller-Lennert et al (arXiv:1306.3142v1). This new quantum divergence has already found applications in quantum information theory. Here we further investigate properties of this new quantum divergence. In particular we show that sandwiched $\alpha$-R\'enyi divergence satisfies the data processing inequality for all values of $\alpha> 1$.  Moreover we prove that $\alpha$-Holevo information, a variant of Holevo information defined in terms of sandwiched $\alpha$-R\'enyi divergence, is super-additive. Our results are based on H\"older's inequality, the Riesz-Thorin theorem and ideas from the theory of complex interpolation. We also employ Sion's minimax theorem.
\end{abstract}

\date{November 15, 2013}

\maketitle



\section{A non-commutative R\'enyi divergence}\label{sec:Renyi}

Several entropic quantities that have been shown to be useful in information theory belong to the family of $\alpha$-R\'enyi entropies. For a random variable $X$ with distribution $\{p_i\}$, the $\alpha$-R\'enyi entropy of $X$, for $\alpha> 0$ and $\alpha\neq 1$, is defined by 
\[  H_\alpha(X) = \frac{1}{1-\alpha}\log\left(\sum_i p_i^\alpha \right).\]
In the quantum setting where states are represented by density matrices (positive semi-definite matrices with normalized trace)
$\alpha$-R\'enyi entropy can be defined by 
\[
H_\alpha(\rho) = \frac{1}{1-\alpha}\log\left( \tr\rho^\alpha  \right),
\]
which reduces to the classical R\'enyi entropy when $\rho$ is a diagonal matrix. The limiting cases of R\'enyi entropy when $\alpha\rightarrow 1$ and $\alpha\rightarrow \infty$ are known to be equal to the Shannon entropy and min-entropy respectively.

Likewise a family of R\'enyi divergences can be defined. For two random variables $X, Y$ with distributions $\{p_i\}$ and $\{q_i\}$ respectively, their $\alpha$-R\'enyi divergence for $\alpha>0$ and $\alpha\neq 1$ is defined as follows. If there exists $i$ such that $q_i=0$ but $p_i\neq 0$ then $D_{\alpha}(X||Y)=\infty$. Otherwise,
\begin{align}\label{eq:crd}
D_\alpha(X||Y) = \frac{1}{\alpha -1} \log\left( \sum_i p_i^{\alpha}q_i^{1-\alpha}  \right),
\end{align}
which is equal to 
\begin{align}\label{eq:crdq}
\frac{1}{\alpha-1}\log\left( \tr(\rho^\alpha\sigma^{1-\alpha}) \right),
\end{align}
if we let $\rho$ and $\sigma$ be the diagonal matrices with diagonal entries $\{p_i\}$ and $\{q_i\}$ respectively.
R\'enyi divergence had been generalized to the quantum setting based on the above equation, and appears in the quantum Chernoff bound~\cite{Chernoff} for the range of $0< \alpha<1$. See also~\cite{MosonyiHiai} for other applications of this R\'enyi divergence.  

To capture the non-commutative essence of the quantum theory, another generalization of R\'enyi divergence to the quantum setting was recently proposed in \cite{Wildeetal} and \cite{Lennertetal} (and before that in the talks by Tomamichel \cite{TMarco} and Fehr \cite{Fehr}):
\begin{align}\label{eq:QRD}
D_{\alpha}(\rho||\sigma) = \begin{cases}
\frac{1}{\alpha-1}\log\left(   \tr(\sigma^{\frac{1-\alpha}{2\alpha}}\rho\sigma^{\frac{1-\alpha}{2\alpha}})^{\alpha}    \right)   & \quad\quad \supp(\rho) \subseteq \supp(\sigma)\\
\infty & \quad \quad\quad \quad \text{otherwise},
\end{cases}
\end{align}
where $\rho, \sigma$ are density matrices and $\supp(\rho)$ is the support of $\rho$, i.e., the span of eigenvectors of $\rho$ corresponding to non-zero eigenvalues. This divergence reduces to~\eqref{eq:crdq} when $\rho$ and $\sigma$ commute.
This new divergence is called \emph{sandwiched} R\'enyi relative entropy in \cite{Wildeetal} and quantum R\'enyi divergence in \cite{Lennertetal}. To avoid confusions with the previous R\'enyi divergence we prefer to employ the first name in this paper and call it sandwiched R\'enyi relative entropy or sandwiched R\'enyi divergence.

Sandwiched R\'enyi divergence has already been shown to be useful in quantum information theory. Based on the framework of Sharma and Warsi \cite{SharmaWarsi} (which itself is based on~\cite{PV}), sandwiched R\'enyi divergence is used in \cite{Wildeetal} to prove a strong converse for the classical capacity of entanglement-breaking channels.

To be a useful divergence, one would expect that sandwiched R\'enyi divergence satisfies some properties including the data processing inequality. Some of these properties have been studied in~\cite{Wildeetal, Lennertetal} and are proved to hold especially when $\alpha=1/2$, $1<\alpha\leq 2$ and $\alpha=\infty$. These properties include 
\begin{enumerate}
\item[(a)] \textbf{Positivity and equality condition}: $D_{\alpha}(\rho||\sigma)\geq 0$ and equality holds if and only if $\rho=\sigma$.
\item[(b)] \textbf{Data processing inequality}: for a quantum channel $\Phi$,  $D_{\alpha}(\rho||\sigma)\geq D_{\alpha}( \Phi(\rho)||\Phi(\sigma))$. 
\end{enumerate}

The data processing inequality is proved in \cite{Wildeetal, Lennertetal}  for the range of $1<\alpha \leq 2$ using Lieb's concavity theorem. Moreover the equality condition is shown to hold for the same range of $\alpha$. These two properties for other values of $\alpha$ are conjectured to hold in \cite{Lennertetal}.

In this paper we prove the equality and positivity condition for all positive $\alpha\neq 1$ (see Theorem~\ref{thm:positive}) and the data processing inequality for all values of $\alpha>1$ (see Theorem~\ref{thm:data-processing}).

There are two more conjectures in \cite{Lennertetal} about sandwiched R\'enyi divergence, the first of which is monotonicity in $\alpha$: 
\begin{enumerate}
\item[(c)] \textbf{Monotonicity in $\alpha$}: $\alpha\mapsto D_{\alpha}(\rho||\sigma)$ is increasing.
\end{enumerate}

This conjecture is proved in \cite{Lennertetal} in the special case where $\rho$ is rank-one. 
Here we prove monotonicity in the general case for $\alpha>1$ (see Theorem~\ref{thm:monotonicity}).

From the definitions it is clear that $\alpha$-R\'enyi entropy can be expressed in terms of sandwiched R\'enyi divergence
\[
H_\alpha(\rho) = - D_\alpha(\rho|| I),
\]
where $I$ denotes the identity operator. Moreover, the conditional (von Neumann) entropy can be defined in terms of KL-divergence as follows. For a bipartite state $\rho_{AB}$ we have 
\[
H(A|B)_{\rho} = -\inf_{\sigma_B} D(\rho_{AB}|| I\otimes \sigma_B).
\] 
A similar equality holds for conditional min-entropy in terms of quantum relative max-entropy. Based on these observations quantum conditional R\'enyi entropy is defined in \cite{Lennertetal} by
\begin{align}\label{eq:conditionalRE}
H_{\alpha}(A|B)_{\rho} := -\inf_{\sigma_B} D_{\alpha}(\rho_{AB} || I\otimes \sigma_B). 
\end{align}
Then the following is conjectured.
\begin{enumerate}
\item[(d)] \textbf{Duality}: For all $1/2\leq  \alpha, \beta \leq \infty$, $\alpha,\beta\neq 1$, such that $\frac{1}{\alpha} + \frac{1}{\beta}=2$, and all tripartite \emph{pure} states $\rho_{ABC}$ we have $H_{\alpha}(A|B) = -H_{\beta}(A|C)$.
\end{enumerate}

The special case of this conjecture when $\rho_{AB}$ is pure, is proved in \cite{Lennertetal}. Here we give a proof for the general case (see Theorem~\ref{thm:duality}).

We also answer an open question raised in~\cite{Wildeetal} about the super-additivity of a quantity called $\alpha$-Holevo information (see Theorem~\ref{thm:chi-super-additive}).

To prove these results we mostly employ properties of Schatten norms. In particular we use H\"older's inequality and its generalizations for proving (a). For (b) and (c), we use the Riesz-Thorin theorem and ideas from the theory of complex interpolation. Finally (d) is proved based on H\"older's duality and Sion's minimax theorem. 

In the following two sections we review H\"older's inequalities and prove the Riesz-Thorin theorem. The main results are stated and proved in Sections~\ref{sec:proofs} and~\ref{sec:chi}.

\emph{Note:} After completion of this work we discovered that Frank and Lieb \cite{FrankLieb} have also proved the data processing inequality (property (b)). Their proof (that is different from ours) works for all values of $\alpha\geq 1/2$ unlike ours which works only for $\alpha>1$. Moreover, monotonicity (property (c)) and duality (property (d)) have been proved in the updated paper of M\"uller-Lennert et al~\cite{updated}. Their proof of duality is similar to ours, but they have a different proof for monotonicity.


\section{H\"older's inequalities}\label{sec:holder}

For a finite dimensional Hilbert space $\cH$, the set of linear operators is denoted by $\bL(\cH)$. The adjoint of $X\in \bL(\cH)$ is denoted by $X^{\dagger}$. The Hilbert-Schmidt inner product on $\bL(\cH)$ is defined by 
$$\langle X, Y\rangle := \tr(X^{\dagger}Y),$$
where $\tr(\cdot)$ is the usual trace function. Throughout this paper for a hermitian (self-adjoint) operator $X$, by $X^{-1}$ we mean the inverse restricted to $\supp(X)$, so $XX^{-1}= X^{-1}X$ equals to the orthogonal projection on $\supp(X)$.

For $X\in \bL(\cH)$ and real $p\neq 0$ define
$$\|X\|_p = \left(\tr  |X|^p  \right)^{\frac{1}{p}},$$
where $|X| = (X^{\dagger}X)^{1/2}$. Note that by the above convention $\|X\|_p$ is defined even for a negative $p$. We also define 
$$\|X\|_{\infty} = \lim_{p\rightarrow \infty} \|X\|_p,$$
which is the usual operator norm of $X$. From the definition we clearly have $\|UXV\|_p = \|X\|_p$ for unitary operators $U,V$. Moreover, $\|X^{\dagger}\|_p = \|X\|_p$.

It is well-known that $\|\cdot \|_p$ for $1\leq p\leq \infty$ satisfies the triangle inequality and is a norm. $\bL(\cH)$ equipped with this norm is denoted by $\bL_p(\cH)$. 

For $p\neq 0$, we let $p'$ be the H\"older conjugate of $p$, i.e., $p'$ is defined by  
\begin{align}\label{eq:conjugate}
\frac{1}{p}+ \frac{1}{p'}=1.
\end{align}
H\"older's inequality states that 
$$\|XY\|_1\leq \|X\|_p \|Y\|_{p'}, \quad\quad\quad\quad \quad 1\leq p\leq \infty,$$
which also implies $|\tr(XY)| \leq \|X\|_p\|Y\|_{p'}$. Using this inequality it is easy to see that for $1\leq p\leq \infty$, the dual space of $\bL_{p}(\cH)$ is $\bL_{p'}(\cH)$. In other words we have
\begin{align}\label{eq:holder-dual}
\|X\|_p= \sup_{Y: \|Y\|_{p'}=1} | \langle Y, X\rangle|, \quad\quad\quad\quad \quad   1\leq p\leq \infty.
\end{align}

H\"older's inequality belongs to a richer family of inequalities. For every $p, q, r>0$ with 
$\frac{1}{r} = \frac{1}{p} + \frac{1}{q}$ we have (see for example Exercise IV.2.7 of \cite{Bhatia-M})
\begin{align} \label{eq:holder}
\|XY\|_r\leq \|X\|_p \|Y\|_q.
\end{align}
Moreover, equality holds in \eqref{eq:holder} if and only $|X|^p$ and $|Y^{\dagger}|^q$ are proportional. 
Then by a simple induction, for every $p_1, \dots, p_k, r>0$ with $\frac 1 r= \frac 1{p_1} + \cdots + \frac{1}{p_k}$ we obtain 
\begin{align}\label{eq:m-holder}
\|X_1\cdots X_k\|_r \leq \|X_1\|_{p_1}\cdots \|X_k\|_{p_k}.
\end{align}

From this inequality and the fact that $\|X^{-1}\|_{-p} = \|X\|_{p}^{-1}$ the following \emph{reverse} H\"older inequality is derived. Let $r>0$ and $p_1, \dots, p_k$ be such that $\frac 1 r= \frac 1{p_1} + \cdots + \frac{1}{p_k}$ and that \emph{exactly} one of $p_i$'s is positive and the rests are negative. Then
\begin{align}\label{eq:m-reverseholder}
\|X_1\cdots X_k\|_r \geq \|X_1\|_{p_1}\cdots \|X_k\|_{p_k}.
\end{align}
In particular we have
\begin{align}\label{eq:reverseholder}
\|X\|_p \|Y\|_{p'}\leq \|XY\|_1, \quad\quad\quad\quad \quad   0< p<1.
\end{align}
Moreover if $X$ is positive semi-definite we have
\begin{align}\label{eq:inverse-holder-duality}
\|X\|_p = \inf_{Y\geq 0, \|Y\|_{p'}=1}   \tr(XY), \quad\quad\quad\quad \quad   0< p<1.
\end{align}

We finish this section by introducing one more notation. Let $\sigma\in \bL(\cH)$ be positive semi-definite, and define the super-operator
$\Gamma_\sigma(X):= \sigma^{1/2}X\sigma^{1/2}$. Define
$$\|X\|_{p,\sigma}:= \| \Gamma_\sigma^{\frac 1 p}(X) \|_p = \| \sigma^{\frac 1{2p}}X\sigma^{\frac 1 {2p}}  \|_p.$$
When $\sigma$ is positive definite (and then full-rank) a simple manipulation verifies that $\|\cdot\|_{p, \sigma}$ is a norm for $1\leq p\leq \infty$, and also the following duality holds.
$$\|X\|_{p, \sigma}= \sup_{Y: \|Y\|_{p', \sigma}=1}  |\langle Y, X\rangle_{\sigma}|, \quad\quad\quad\quad \quad   1\leq p\leq \infty,$$
where $\langle Y, X\rangle_{\sigma} = \tr((Y^{\dagger}\Gamma_\sigma(X)) = \tr(Y^{\dagger}\sigma^{1/2}X\sigma^{1/2})$. 
The space $\bL(\cH)$ equipped with this norm is denoted by $\bL_{p, \sigma}(\cH)$.

\section{Riesz-Thorin theorem}\label{sec:Riesz-Thorin}

Most of the proofs in this paper are based on the theory of complex interpolation, especially the Riesz-Thorin theorem for which we refer to the textbook \cite{BerghL} and lecture notes \cite{Xu, Lunardi}. This theory has already found applications in quantum information theory \cite{PWPR, BD}. 
Here to obtain self-contained proofs, instead of directly referring to this theory we prefer to give a proof of the Riesz-Thorin theorem in the special case that is more relevant to quantum information theory. This proof is based on Hadamard's three-line theorem (see \cite[page 33]{ReedSimon}).

Define 
$$S=\{z\in \mathbb C: 0\leq \Re\, z\leq 1\},$$
where  $\Re\, z \in \mathbb R$ denotes the real part of the complex number $z\in \mathbb C$.

\begin{theorem}\label{thm:hadamard}\emph{(Hadamard's three-line theorem \cite{ReedSimon})} Let $f: S\rightarrow \mathbb C$ be a bounded function that is holomorphic in the interior of $S$ and continuous on the boundary. For $k=0, 1$ let
$$M_k=\sup_{t\in \mathbb R} |f(k+it)|.$$
Then for every $0\leq \theta\leq 1$ we have $|f(\theta)|\leq M_0^{1-\theta}M_1^{\theta}$.
\end{theorem}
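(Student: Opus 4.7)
The plan is to reduce the claim to the statement that an auxiliary bounded holomorphic function on $S$ whose modulus is at most $1$ on both boundary lines is bounded by $1$ on all of $S$, and then to derive this from the maximum modulus principle after a regularization. First I would assume $M_0, M_1 > 0$ (the degenerate case $M_k = 0$ can be recovered by replacing $M_k$ with $M_k + \delta$ and letting $\delta \to 0^+$), and set
$$g(z) := f(z)\, M_0^{z-1} M_1^{-z} = f(z)\, e^{(z-1)\log M_0 - z\log M_1}.$$
Since the exponential factor is entire and nowhere zero, $g$ is bounded and holomorphic in the interior of $S$ and continuous on $\overline S$; on the boundary one checks $|g(it)| \leq 1$ and $|g(1+it)| \leq 1$ directly from the definitions of $M_0, M_1$. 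Proving the theorem then reduces to the assertion $|g(z)| \leq 1$ throughout $S$.

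The main obstacle is that $S$ is unbounded in the imaginary direction, so the maximum modulus principle cannot be applied directly: nothing a priori prevents $|g|$ from being comparable to its supremum at large $|\text{Im}\,z|$ even though it is merely bounded. The standard fix is to introduce a Gaussian-type regulator
$$g_\epsilon(z) := g(z)\, e^{\epsilon(z^2 - 1)}, \qquad \epsilon > 0.$$
Writing $z = x+iy$ with $0 \leq x \leq 1$ gives $|e^{\epsilon(z^2-1)}| = e^{\epsilon(x^2 - y^2 - 1)} \leq e^{-\epsilon y^2}$, so $g_\epsilon$ decays to zero uniformly in $x \in [0,1]$ as $|y| \to \infty$, while on the boundary lines one still has $|g_\epsilon(k+it)| \leq e^{\epsilon(k^2 - t^2 - 1)} \leq 1$ for $k = 0, 1$.

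Now for each fixed $\epsilon > 0$ I would pick $T$ so large that $|g_\epsilon(x+iy)| \leq 1$ whenever $|y| = T$ and $0 \leq x \leq 1$; this is possible because $g$ is bounded on $S$ while the regulator decays like $e^{-\epsilon T^2}$. Applying the maximum modulus principle to the bounded rectangle $\{x+iy : 0 \leq x \leq 1,\ |y| \leq T\}$ then yields $|g_\epsilon(z)| \leq 1$ on that rectangle, and hence on all of $S$ by letting $T \to \infty$. Finally, for each fixed $z$ in the interior of $S$, letting $\epsilon \to 0^+$ and using $e^{\epsilon(z^2-1)} \to 1$ pointwise gives $|g(z)| \leq 1$; specializing to $z = \theta \in [0,1]$ yields $|f(\theta)| \leq M_0^{1-\theta} M_1^{\theta}$. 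The only point worth flagging is that because $M_0, M_1 > 0$ are real, the factor $M_0^{z-1} M_1^{-z}$ is defined via the real logarithms without any branch ambiguity, so there is no subtlety in forming $g$ as a single-valued entire multiplier of $f$.
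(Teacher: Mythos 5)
Your proof is correct and complete. The paper itself does not prove this theorem---it is quoted from Reed--Simon as a known result---and what you have written is precisely the standard argument given there: normalize by the entire, nonvanishing multiplier $M_0^{z-1}M_1^{-z}$ to reduce to showing a boundary bound of $1$ propagates to the interior, then handle the unboundedness of the strip via the Phragm\'en--Lindel\"of-type regulator $e^{\epsilon(z^2-1)}$, apply the maximum modulus principle on truncated rectangles, and pass to the limits $T\to\infty$, $\epsilon\to 0^+$, with the degenerate case $M_k=0$ recovered by the $\delta$-perturbation. All the boundary estimates and the decay estimate $|e^{\epsilon(z^2-1)}|\le e^{-\epsilon y^2}$ for $0\le \Re z\le 1$ check out, so there is nothing to add.
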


A map $F:S\rightarrow \bL(\cH)$ is call holomorphic (continuous, bounded) if the corresponding functions to matrix entries is holomorphic (continuous, bounded). 
The following theorem is a generalization of Hadamard's three-line theorem.

 \begin{theorem}\label{thm:banach-hadamard}
 Let $F:S\rightarrow \bL(\cH)$ be a bounded map that is holomorphic in the interior of $S$ and continuous on the boundary. Let $\sigma\in \bL(\cH)$ be positive definite. Assume that $1\leq p_0\leq  p_1\leq \infty$ and for $0< \theta<1 $ define $p_0\leq p_\theta\leq p_1$ by 
 \begin{align}\label{eq:p-theta}
 \frac{1}{p_\theta} = \frac{1-\theta}{p_0} + \frac{\theta}{p_1}.
 \end{align}
 For $k=0, 1$ define 
 $$M_k = \sup_{t\in \mathbb R} \|F(k+it)\|_{p_k, \sigma}.$$
 Then we have 
 $$\|F(\theta)\|_{p_{\theta}, \sigma}\leq M_0^{1-\theta}M_1^{\theta}.$$
 \end{theorem}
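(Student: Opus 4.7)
The plan is to reduce the $\sigma$-weighted statement to the classical (unweighted) Riesz--Thorin theorem, and then prove the unweighted version by duality plus Theorem~\ref{thm:hadamard}. For $z\in S$ define the interpolated exponent
\[
\frac{1}{p(z)} := \frac{1-z}{p_0} + \frac{z}{p_1},
\]
so that $p(\theta)=p_\theta$ and $\Re\bigl(1/p(k+it)\bigr)=1/p_k$. Since $\sigma$ is positive definite, $\sigma^z$ is entire and uniformly bounded on $S$, hence
\[
G(z) := \sigma^{1/(2p(z))}\, F(z)\, \sigma^{1/(2p(z))}
\]
is a bounded map $S\to\bL(\cH)$, holomorphic on the interior of $S$ and continuous up to the boundary. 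On the line $\Re z=k$ we may factor $\sigma^{1/(2p(k+it))} = \sigma^{1/(2p_k)}\sigma^{ita_k}$ where $\sigma^{ita_k}$ is unitary and commutes with $\sigma^{1/(2p_k)}$. Unitary invariance of the Schatten $p_k$-norm then gives $\|G(k+it)\|_{p_k} = \|F(k+it)\|_{p_k,\sigma}\le M_k$, while $\|G(\theta)\|_{p_\theta} = \|F(\theta)\|_{p_\theta,\sigma}$ directly. It therefore suffices to prove the unweighted estimate $\|G(\theta)\|_{p_\theta}\le M_0^{1-\theta}M_1^\theta$.

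For that I would invoke H\"older's duality~\eqref{eq:holder-dual}: $\|G(\theta)\|_{p_\theta} = \sup_Y |\tr(YG(\theta))|$ with $Y$ ranging over operators satisfying $\|Y\|_{q_\theta}=1$, where $q_k := p_k'$. Fix such a $Y$, polar-decompose $Y = V|Y|$ with $V$ a partial isometry, and, writing $1/q(z) = (1-z)/q_0 + z/q_1$, define the interpolating family
\[
Y(z) := V\,|Y|^{q_\theta/q(z)}.
\]
Then $Y(\theta)=Y$, and on $\Re z = k$ the real part of the exponent equals $q_\theta/q_k$ while the imaginary part contributes only a unitary factor on $\supp|Y|$; hence $\|Y(k+it)\|_{q_k} = \||Y|\|_{q_\theta}^{q_\theta/q_k}=1$. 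Now form the scalar function $h(z) := \tr(Y(z)G(z))$. It is bounded and holomorphic in the interior of $S$, continuous on the boundary, and H\"older's inequality gives $|h(k+it)| \le \|Y(k+it)\|_{q_k}\|G(k+it)\|_{p_k}\le M_k$ for $k=0,1$. Hadamard's three-line theorem then forces $|h(\theta)|\le M_0^{1-\theta}M_1^\theta$; taking the supremum over $Y$ concludes the argument.

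The main delicate point is the bookkeeping around complex powers: one needs $\sigma$ strictly positive definite so that $\sigma^z$ is entire and bounded on $S$; the convention $0^{is} := 0$ on $\ker|Y|$ so that $|Y|^{is}$ is unitary on $\supp|Y|$ and the Schatten-norm identities used above are exact; and the endpoint case $p_1=\infty$ (where $q_1=1$), in which one verifies the exponent formula still yields $\Re(q_\theta/q(1+it))=q_\theta$ and hence the same unit bound on $\|Y(1+it)\|_{1}$. Once these conventions are in place, $h$ is genuinely a bounded, interior-holomorphic, boundary-continuous scalar function on $S$, and Theorem~\ref{thm:hadamard} applies cleanly.
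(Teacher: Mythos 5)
Your proof is correct and follows essentially the same route as the paper's: pair the (weighted) function against a holomorphically interpolated family for the dual element, verify that the boundary norms of that family equal $1$ via the unitarity of $\sigma^{it}$ and $|Y|^{it}$, and combine Hadamard's three-line theorem with H\"older's inequality on the lines $\Re\, z=0,1$. The only organizational difference is that you first conjugate away the weight $\sigma$ and then interpolate through the polar decomposition of $Y$, whereas the paper keeps the weighted norms throughout and interpolates through the singular value decomposition of $\Gamma_{\sigma}^{1/p'_{\theta}}(X)$; the two constructions are equivalent.
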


\begin{proof} Let $X$ be such that $\|X\|_{p'_{\theta}, \sigma} = 1$ and $\|F(\theta)\|_{p_{\theta}, \sigma} = \langle X^{\dagger}, f(\theta)\rangle_{\sigma}.$ Using $\|X\|_{p'_{\theta}, \sigma}=\|\Gamma_\sigma^{1/p'_{\theta}}(X)\|_{p'_{\theta}}=1$, the singular value decomposition of $\Gamma_\sigma^{1/p'_{\theta}}(X)$ has the form 
$$\Gamma_\sigma^{1/p'_{\theta}}(X) = UD^{\frac{1}{p'_{\theta}}}V,$$
where $U, V$ are unitary and $D$ is diagonal with non-negative entries and $\tr(D)=1$.
Define 
$$X(z) = \Gamma_{\sigma}^{-(\frac{1-z}{p'_0} + \frac{z}{p'_1})} \left( U D^{(\frac{1-z}{p'_0} + \frac{z}{p'_1})}  V    \right)= {\sigma}^{-(\frac{1-z}{2p'_0} + \frac{z}{2p'_1})}     \left( U D^{(\frac{1-z}{p'_0} + \frac{z}{p'_1})}  V    \right)   {\sigma}^{-(\frac{1-z}{2p'_0} + \frac{z}{2p'_1})} .$$
Observe that the map $z\mapsto X(z)$ is holomorphic, and $X(\theta)=X$.

Now define
$$g(z) = \langle X(it)^{\dagger}, F(z)\rangle_{\sigma} = \tr\left( X(z) \sigma^{1/2} F(z)\sigma^{1/2}  \right).$$
$g:S\rightarrow \mathbb C$ satisfies assumptions of Hadamard's three-line theorem. Thus we have
\begin{align*}
\|F(\theta)\|_{p_{\theta}, \sigma} & = \langle X^{\dagger}, F(\theta)\rangle_{\sigma}\\
& = |g(\theta)|\\
&\leq \left(  \sup_{t\in \mathbb R}    |g(it)|    \right)^{1-\theta} \left(   \sup_{t\in \mathbb R}  |g(1+it)|   \right)^{\theta}\\
& = \left( \sup_{t\in \mathbb R}   |\langle X(it)^{\dagger}, F(it)\rangle_{\sigma}|        \right)^{1-\theta} \left( \sup_{t\in \mathbb R}   |\langle X(1+it)^{\dagger}, F(1+it)\rangle_{\sigma} |       \right)^{\theta}\\
&\leq \left(  \sup_{t\in \mathbb R}   \| X(it)  \|_{p'_0, \sigma} \|  F(it) \|_{p_0, \sigma}    \right)^{1-\theta} \left(  \sup_{t\in \mathbb R}   \| X(1+it)  \|_{p'_1, \sigma} \|  F(1+it) \|_{p_1, \sigma}    \right)^{\theta},
\end{align*}
where in the last line we use H\"older's inequality. 
By definition we have $\|X(it)\|_{p'_0, \sigma} = \|\Gamma_\sigma^{p'_0}(X(it))\|_{p'_0}$ and 
$$\Gamma_\sigma^{p'_0}(X(it)) = \sigma^{\frac{it}{2p'_0}-\frac{it}{2p'_1}} \left( U D^{(\frac{1-it}{p'_0} + \frac{it}{p'_1})}  V    \right) \sigma^{\frac{it}{2p'_0}-\frac{it}{2p'_1}}= U_t D^{\frac{1}{p'_0}}V_t,$$
where $U_t =  \sigma^{\frac{it}{2p'_0}-\frac{it}{2p'_1}}  U D^{-\frac{it}{p'_0} + \frac{it}{p'_1})}$ and $V_t = V \sigma^{\frac{it}{2p'_0}-\frac{it}{2p'_1}},$ are unitary. As a result, $\|X(it)\|_{p'_0, \sigma}=1$ for every $t\in \mathbb R$. We similarly have $\|X(1+it)\|_{p'_1,\sigma}=1$. Therefore, 
$$\|F(\theta)\|_{p_\theta, \sigma} \leq \left(  \sup_{t\in \mathbb R}    \|F(it) \|_{p_0, \sigma}    \right)^{1-\theta} \left(  \sup_{t\in \mathbb R}    \|  F(1+it) \|_{p_1, \sigma}    \right)^{1-\theta} = M_0^{1-\theta}M_1^{\theta}.$$

\end{proof}

Using this theorem one can indeed show that $\bL_{p_{\theta}, \sigma}(\cH)$ is the complex interpolation space between $\bL_{p_0, \sigma}(\cH)$ and $\bL_{p_1, \sigma}(\cH)$. See \cite{PisierXu} and references there for more details. See also \cite{RicardXu, Conde} for similar results.

\begin{corollary}\label{corol:theta}
Let $1\leq p_0< p_1\leq \infty$ and $0<\theta<1$, and define $p_{\theta}$ by \eqref{eq:p-theta}. Then for every positive definite $\sigma\in \bL(\cH)$ and $X\in \bL(\cH)$ we have
$$\|X\|_{p_{\theta}, \sigma}\leq \|X\|_{p_0, \sigma}^{1-\theta}\|X\|_{p_1, \sigma}^{\theta}.$$
\end{corollary}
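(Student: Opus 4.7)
The plan is to obtain this inequality as an immediate specialization of Theorem~\ref{thm:banach-hadamard} to a constant operator-valued map. Define $F: S \to \bL(\cH)$ by $F(z) = X$ for every $z \in S$. This $F$ is trivially bounded, continuous on the closure of $S$, and holomorphic on its interior, since its range is the single point $\{X\}$.

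With this choice, for each $k \in \{0,1\}$ and every $t \in \mathbb{R}$ we have $F(k+it) = X$, so
$$M_k = \sup_{t \in \mathbb{R}} \|F(k+it)\|_{p_k, \sigma} = \|X\|_{p_k, \sigma}.$$
Plugging into Theorem~\ref{thm:banach-hadamard} at the point $\theta \in (0,1)$ yields
$$\|X\|_{p_\theta, \sigma} = \|F(\theta)\|_{p_\theta, \sigma} \le M_0^{1-\theta} M_1^{\theta} = \|X\|_{p_0, \sigma}^{1-\theta}\, \|X\|_{p_1, \sigma}^{\theta},$$
which is exactly the desired log-convexity inequality for the weighted norms.

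There is essentially no obstacle here: all of the analytic work has already been absorbed into Theorem~\ref{thm:banach-hadamard}, and this corollary is just its restriction to constant maps, i.e.\ a Lyapunov-type inequality for the family $\|\cdot\|_{p,\sigma}$. A more elementary alternative would be to unfold the definition $\|X\|_{p,\sigma} = \|\sigma^{1/(2p)} X \sigma^{1/(2p)}\|_p$ and apply the generalized H\"older inequality~\eqref{eq:m-holder} after splitting $\sigma^{1/(2p_\theta)} X \sigma^{1/(2p_\theta)}$ as a product of factors with exponents dictated by~\eqref{eq:p-theta}; but invoking Theorem~\ref{thm:banach-hadamard} is the cleanest route given the machinery developed in this section.
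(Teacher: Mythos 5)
Your proposal is correct and is exactly the paper's own argument: the author also proves this corollary by taking the constant map $F(z)=X$ in Theorem~\ref{thm:banach-hadamard}, so that $M_k=\|X\|_{p_k,\sigma}$ and the bound follows immediately. No gaps.
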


\begin{proof}
In Theorem~\ref{thm:banach-hadamard} take the constant map $F(z)=X$.
\end{proof}

\vspace{.25in}

We need one more notation to state the Riesz-Thorin theorem. Let $\Phi:\bL(\cH)\rightarrow \bL(\cH')$ be a linear super-operator. Then for each $1\leq p, q\leq \infty$ and positive definite $\sigma\in \bL(\cH)$ and $\sigma'\in \bL(\cH')$ we may consider $\Phi$ as an operator from the normed space $\bL_{p, \sigma}(\cH)$ to $\bL_{q, \sigma'}(\cH')$. Then the super-operator norm of $\Phi$ is defined by
\begin{align}\label{eq:sup-norm-operator}
\|\Phi\|_{(p,\sigma)\rightarrow (q, \sigma')}  = \sup_{X\neq 0}  \frac{\|\Phi(X)\|_{q, \sigma'}}{\|X\|_{p, \sigma}}.
\end{align}
From the definition it is clear that for every $X$ we have
$$\|\Phi(X)\|_{q, \sigma'} \leq \|\Phi\|_{(p,\sigma)\rightarrow (q, \sigma')} \|X\|_{p, \sigma}.$$

An important property of the norm~\eqref{eq:sup-norm-operator} is that if $\Phi$ is completely-positive, then the supremum is attained at a positive semi-definite $X$~\cite{Watrous, Audenaert}. Here we use the fact that the composition of a completely-positive map with $\Gamma_{\sigma}$ is also completely-positive.

\begin{theorem}\label{thm:riesz-thorin} \emph{(Riesz-Thorin theorem for $\bL_{p, \sigma}$ spaces)} Let $\Phi:\bL(\cH)\rightarrow \bL(\cH')$ be a linear super-operator. Assume that $1\leq p_0 \leq p_1\leq \infty$ and $1\leq q_0\leq q_1\leq \infty$.  Let $0\leq \theta\leq 1$ and define $p_\theta$ and similarly $q_\theta$ using \eqref{eq:p-theta}. Finally assume that $\sigma\in \bL(\cH)$ and $\sigma'\in \bL(\cH')$ are positive definite. Then we have
$$\|\Phi\|_{(p_\theta, \sigma)\rightarrow(q_\theta, \sigma')} \leq \|\Phi\|_{(p_0, \sigma)\rightarrow(q_0, \sigma')}^{1-\theta} \|\Phi\|_{(p_1, \sigma)\rightarrow(q_1, \sigma')}^{\theta}.$$
\end{theorem}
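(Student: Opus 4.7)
The plan is to reduce the claim to Theorem~\ref{thm:banach-hadamard} by lifting the action of $\Phi$ to a holomorphic operator-valued function on the strip $S$. By the definition of the super-operator norm it suffices to fix $X\in\bL(\cH)$ with $\|X\|_{p_\theta,\sigma}=1$ and establish $\|\Phi(X)\|_{q_\theta,\sigma'}\leq M_0^{1-\theta}M_1^{\theta}$, where $M_k=\|\Phi\|_{(p_k,\sigma)\rightarrow(q_k,\sigma')}$.

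The first step is to construct an interpolating family $X(\cdot):S\rightarrow\bL(\cH)$ with $X(\theta)=X$ and with unit norm on the two boundary lines. Mimicking the construction in the proof of Theorem~\ref{thm:banach-hadamard}, I would take a singular value decomposition $\Gamma_\sigma^{1/p_\theta}(X)=U D^{1/p_\theta}V$ with $U,V$ unitary and $D$ diagonal, non-negative, satisfying $\tr(D)=1$, and set
$$X(z)=\Gamma_\sigma^{-\bigl(\frac{1-z}{p_0}+\frac{z}{p_1}\bigr)}\!\left(U\,D^{\frac{1-z}{p_0}+\frac{z}{p_1}}\,V\right),\qquad F(z)=\Phi(X(z)).$$
By inspection $X(\theta)=X$, so $F(\theta)=\Phi(X)$, and since $\sigma,\sigma'$ are positive definite, $X(\cdot)$ and $F(\cdot)$ are bounded on $\bar S$, holomorphic on its interior, and continuous on its boundary.

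The next step is the norm calculation on the boundary. As in the proof of Theorem~\ref{thm:banach-hadamard}, on the line $\Re z=0$ the operator $\Gamma_\sigma^{1/p_0}(X(it))$ differs from $U D^{1/p_0}V$ only by left- and right-multiplication by unitary imaginary powers of $\sigma$, together with an imaginary power of $D$ that commutes with $D^{1/p_0}$ and preserves the absolute values of its diagonal entries on $\supp(D)$. Hence $\|X(it)\|_{p_0,\sigma}=(\tr D)^{1/p_0}=1$ for every $t\in\mathbb R$, and the analogous calculation yields $\|X(1+it)\|_{p_1,\sigma}=1$. From the definition of the super-operator norms this gives $\sup_t\|F(it)\|_{q_0,\sigma'}\leq M_0$ and $\sup_t\|F(1+it)\|_{q_1,\sigma'}\leq M_1$.

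To finish, I would apply Theorem~\ref{thm:banach-hadamard} to $F$ with exponents $q_0,q_1$ and positive definite operator $\sigma'$; this immediately gives $\|\Phi(X)\|_{q_\theta,\sigma'}=\|F(\theta)\|_{q_\theta,\sigma'}\leq M_0^{1-\theta}M_1^{\theta}$, as required. The main obstacle is precisely the construction of $X(z)$: it must simultaneously reduce to $X$ at $z=\theta$ and have unit norm in two \emph{different} weighted Schatten spaces on the two boundary lines $\Re z=0$ and $\Re z=1$. The ansatz above satisfies all three conditions, and its verification rests on the elementary but crucial fact that imaginary powers of positive operators are unitary, so Schatten norms are invariant under their conjugation action. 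Edge cases $p_k=\infty$ or $q_k=\infty$ are handled with the convention $1/\infty=0$ in the exponents.
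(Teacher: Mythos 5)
Your proposal is correct and follows essentially the same route as the paper's proof: the same singular-value-decomposition ansatz $X(z)=\Gamma_\sigma^{-(\frac{1-z}{p_0}+\frac{z}{p_1})}\bigl(UD^{\frac{1-z}{p_0}+\frac{z}{p_1}}V\bigr)$, the same boundary-norm computation via unitarity of imaginary powers, and the same application of Theorem~\ref{thm:banach-hadamard} to $F(z)=\Phi(X(z))$ with exponents $q_0,q_1$ and weight $\sigma'$. No gaps.
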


\begin{proof} It suffices to show that for every $X\in \bL(\cH)$ with $\|X\|_{p_\theta, \sigma}=1$ we have 
$$\|\Phi(X)\|_{q_\theta, \sigma'}\leq \|\Phi\|_{(p_0, \sigma)\rightarrow(q_0, \sigma')}^{1-\theta} \|\Phi\|_{(p_1, \sigma)\rightarrow(q_1, \sigma')}^{\theta}.$$
As in the proof of Theorem~\ref{thm:banach-hadamard} such an $X$ has the form $X= \Gamma_{\sigma}^{-\frac{1}{p_{\theta}}}\left( UD^{\frac 1 {p_{\theta}}} V \right)$ where $U, V$ are unitary and $D$ is diagonal with non-negative entries and $\tr(D)=1$. Now define 
$$X(z) = \Gamma_{\sigma}^{-(\frac{1-z}{p_0} + \frac{z}{p_1})} \left(U D^{(\frac{1-z}{p_0} + \frac{z}{p_1}))}  V \right),$$
and let $F:S\rightarrow \mathbb C$, $F(z) = \Phi(X(z))$. Then by Theorem~\ref{thm:banach-hadamard} we have
\begin{align}\label{eq:RT-bound}
\|\Phi(X)\|_{q_{\theta}, \sigma'}=\|\Phi(X(\theta))\|_{q_{\theta}, \sigma'} \leq \left( \sup_{t\in \mathbb R}\| \Phi(X(it))\|_{q_0, \sigma'}\right)^{1-\theta}
\left( \sup_{t\in \mathbb R}\| \Phi(X(1+it))\|_{q_1, \sigma'}\right)^{\theta}.
\end{align} 
Observe that, by the definition of the operator norm,  we have
$$ \| \Phi(X(it))\|_{q_0, \sigma'} \leq \|\Phi\|_{(p_0, \sigma)\rightarrow (q_0, \sigma')} \|X(it)\|_{p_0, \sigma}.$$
On the other hand, similar to the argument presented in the proof of Theorem~\ref{thm:banach-hadamard}, $ \|X(it)\|_{p_0, \sigma}=1$. As a result 
$$\sup_{t\in \mathbb R}\| \Phi(X(it))\|_{q_0, \sigma'} \leq \|\Phi\|_{(p_0, \sigma)\rightarrow (q_0, \sigma')},$$
and similarly 
$$\sup_{t\in \mathbb R}\| \Phi(X(1+it))\|_{q_1, \sigma'} \leq \|\Phi\|_{(p_1, \sigma)\rightarrow (q_1, \sigma')}.$$
The proof finishes by using these two bounds in \eqref{eq:RT-bound}.

\end{proof}

\section{Statements and proofs of the main results}\label{sec:proofs}

Using notations developed in Section~\ref{sec:holder} sandwiched R\'enyi divergence~\eqref{eq:QRD} can equivalently be defined by
\begin{align}\label{eq:QRD-norm-1}
D_{\alpha}(\rho||\sigma) = \begin{cases}
\alpha'\log\|\sigma^{-\frac{1}{2\alpha'}}\rho\sigma^{-\frac{1}{2\alpha'}} \|_{\alpha}   & \quad\quad \supp(\rho) \subseteq \supp(\sigma)\\
\infty & \quad \quad\quad \quad \text{otherwise}.
\end{cases}
\end{align}
Here we use the fact that $\alpha'$ the H\"older conjugate of $\alpha$ define by~\eqref{eq:conjugate} is equal to 
$$\alpha' = \frac{\alpha}{\alpha -1}.$$

In the following we also use
\begin{align}\label{eq:norm-eq}
\|   \sigma^{-\frac{1}{2\alpha'}}\rho\sigma^{-\frac{1}{2\alpha'}}  \|_{\alpha} = \|\Gamma_{\sigma}^{-\frac{1}{\alpha'}}(\rho)\|_{\alpha} = \|\Gamma_{\sigma}^{-1} (\rho)\|_{\alpha, \sigma}
\end{align}

We now have all the required tools to prove properties of sandwiched R\'enyi divergence stated in Section~\ref{sec:Renyi}.

\begin{theorem}\label{thm:positive} \emph{(Positivity and equality condition)} $D_\alpha(\rho||\sigma)\geq 0$ for density matrices $\rho, \sigma$ and all positive $\alpha\neq 1$. Moreover, equality holds if and only if $\rho = \sigma$.
\end{theorem}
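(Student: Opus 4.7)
The plan is to work with the alternative form of sandwiched R\'enyi divergence given in~\eqref{eq:QRD-norm-1}:
\[
D_\alpha(\rho\|\sigma)=\alpha'\log\|X\|_\alpha,\qquad X:=\sigma^{-\tfrac{1}{2\alpha'}}\rho\sigma^{-\tfrac{1}{2\alpha'}},
\]
and to compare $\|X\|_\alpha$ to the value $1$. The main point is that the sign of $\alpha'=\alpha/(\alpha-1)$ is opposite for the two regimes $\alpha>1$ and $0<\alpha<1$, so to get $D_\alpha\geq 0$ we need $\|X\|_\alpha\geq 1$ in the first case and $\|X\|_\alpha\leq 1$ in the second. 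In both regimes the natural test operator is $Y:=\sigma^{1/\alpha'}$, which is positive semi-definite, satisfies $\|Y\|_{\alpha'}=(\tr\sigma)^{1/\alpha'}=1$ (using the convention for negative $p$), and gives the key identity
\[
\tr(XY)=\tr\!\left(\sigma^{-\tfrac{1}{2\alpha'}}\rho\sigma^{-\tfrac{1}{2\alpha'}}\sigma^{1/\alpha'}\right)=\tr(\rho)=1,
\]
where we used cyclicity and the assumption $\supp(\rho)\subseteq\supp(\sigma)$ (the case of infinite divergence being trivial).

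First I would handle $\alpha>1$. By H\"older duality~\eqref{eq:holder-dual} applied to $X$ and $Y$,
\[
\|X\|_\alpha\geq |\langle Y,X\rangle|=\tr(XY)=1,
\]
so $D_\alpha=\alpha'\log\|X\|_\alpha\geq 0$ since $\alpha'>0$. Next, for $0<\alpha<1$, since $X$ is positive semi-definite I would apply the inverse H\"older duality~\eqref{eq:inverse-holder-duality},
\[
\|X\|_\alpha=\inf_{Z\geq 0,\ \|Z\|_{\alpha'}=1}\tr(XZ)\leq \tr(XY)=1,
\]
and now $\alpha'<0$ forces $D_\alpha=\alpha'\log\|X\|_\alpha\geq 0$ again. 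Restricting throughout to $\supp(\sigma)$ makes $\sigma$ invertible and takes care of the non-full-rank case.

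For the equality condition, the ``if'' direction is immediate: with $\rho=\sigma$, direct computation gives $X=\sigma^{1/\alpha}$ and $\|X\|_\alpha=(\tr\sigma)^{1/\alpha}=1$. For the converse, $D_\alpha=0$ means equality in the relevant H\"older-type inequality above. By the equality clause of~\eqref{eq:holder} (and its reverse counterpart, which can be derived from the direct one via $\|X^{-1}\|_{-p}=\|X\|_p^{-1}$), equality forces $X^\alpha\propto Y^{\alpha'}=\sigma$; combined with $\tr X^\alpha=\|X\|_\alpha^{\alpha}=1$ this yields $X^\alpha=\sigma$. Conjugating by $\sigma^{1/(2\alpha')}$ and using $\tfrac{1}{\alpha}+\tfrac{1}{\alpha'}=1$ gives $\rho=\sigma$.

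The part that requires the most care is the $0<\alpha<1$ regime, where $\alpha'$ is negative and the relevant ``norm'' is really just the bracket $(\tr|\cdot|^{\alpha'})^{1/\alpha'}$; I would spell out that $\|\sigma^{1/\alpha'}\|_{\alpha'}=1$ under the paper's convention and that the reverse H\"older inequality~\eqref{eq:reverseholder}, together with the duality~\eqref{eq:inverse-holder-duality}, applies. The only other subtlety is tracking the equality condition for reverse H\"older, which I would obtain by applying the standard H\"older equality clause to the substitution used to derive the reverse inequality in the first place.
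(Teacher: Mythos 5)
Your proof is correct and follows essentially the same route as the paper: both reduce to showing $\|\sigma^{-1/(2\alpha')}\rho\sigma^{-1/(2\alpha')}\|_\alpha\gtrless 1$ via H\"older and reverse H\"older against powers of $\sigma$, with the key identity being that the sandwiching powers of $\sigma$ recombine to give $\tr\rho=1$. The only cosmetic difference is that you phrase the comparison through the duality forms \eqref{eq:holder-dual} and \eqref{eq:inverse-holder-duality} with the test operator $\sigma^{1/\alpha'}$, whereas the paper multiplies by $\|\sigma^{1/(2\alpha')}\|_{2\alpha'}$ on both sides and invokes the three-factor inequalities \eqref{eq:m-holder} and \eqref{eq:m-reverseholder}; the equality analysis via $X^\alpha\propto\sigma$ is likewise the same (and your normalization step making the proportionality an equality is, if anything, slightly more explicit than the paper's).
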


\begin{proof}
Using expression~\eqref{eq:QRD-norm-1} for sandwiched R\'enyi divergence we need to show that $\|\sigma^{-\frac{1}{2\alpha'}}\rho \sigma^{-\frac{1}{2\alpha'}}\|_{\alpha} \geq 1$ when $\alpha>1$, and $\|\sigma^{-\frac{1}{2\alpha'}}\rho \sigma^{-\frac{1}{2\alpha'}}\|_{\alpha} \leq 1$ when $\alpha<1$.

Observe that $\|\sigma^{\frac{1}{2\alpha'}}\|_{2\alpha'} = [\tr (\sigma)]^{1/(2\alpha')}=1$. Moreover, $\frac{1}{2\alpha'} + \frac{1}{\alpha} + \frac{1}{2\alpha'} =1$. Thus for $\alpha>1$ by~\eqref{eq:m-holder} we have
\begin{align*}
\|\sigma^{-\frac{1}{2\alpha'}}\rho \sigma^{-\frac{1}{2\alpha'}}\|_{\alpha} & =\|\sigma^{\frac{1}{2\alpha'}}\|_{2\alpha'}  \|\sigma^{-\frac{1}{2\alpha'}}\rho \sigma^{-\frac{1}{2\alpha'}}\|_{\alpha}\|\sigma^{\frac{1}{2\alpha'}}\|_{2\alpha'} \\
& \geq   \|\sigma^{\frac{1}{2\alpha'}} \left(\sigma^{-\frac{1}{2\alpha'}}\rho \sigma^{-\frac{1}{2\alpha'}}\right) \sigma^{\frac{1}{2\alpha'}}\|_{1} \\ & = \|\rho\|_1\\
& =1.
\end{align*}
The case $\alpha<1$ is similar and is proved using~\eqref{eq:m-reverseholder}. 

The equality condition is simply a consequence of the equality condition in H\"older's inequality~\eqref{eq:holder}. In fact equality implies that $\sigma$ and $(\sigma^{-\frac{1}{2\alpha'}}\rho \sigma^{-\frac{1}{2\alpha'}})^{\alpha}$ are proportional. This in particular implies that the density matrices $\rho$ and $\sigma$ commute, which implies that $\sigma$ and $\rho$ are proportional and then equal.
\end{proof}

\begin{theorem}\label{thm:data-processing} \emph{(Data processing inequality)} For all density matrices $\rho, \sigma$, and $\alpha>1$, and CPTP map (quantum channel) $\Phi$ we have 
\begin{align}\label{eq:data-processing}
D_{\alpha}(\rho||\sigma)\geq D_{\alpha}(\Phi(\rho)||\Phi(\sigma)).
\end{align}
\end{theorem}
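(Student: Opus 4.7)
The plan is to recast the data processing inequality as a statement about the operator norm of an auxiliary super-operator between weighted Schatten spaces, and then invoke the Riesz–Thorin theorem (Theorem~\ref{thm:riesz-thorin}) to interpolate between the easy endpoints $\alpha=1$ and $\alpha=\infty$.

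Using \eqref{eq:QRD-norm-1} together with \eqref{eq:norm-eq}, the inequality $D_\alpha(\rho\|\sigma)\geq D_\alpha(\Phi(\rho)\|\Phi(\sigma))$ for $\alpha>1$ (so $\alpha'>0$) is equivalent to
$$
\|\Gamma_{\Phi(\sigma)}^{-1}(\Phi(\rho))\|_{\alpha,\Phi(\sigma)}\ \leq\ \|\Gamma_\sigma^{-1}(\rho)\|_{\alpha,\sigma}.
$$
I would first assume $\sigma$ and $\Phi(\sigma)$ are positive definite (the general case follows by perturbing $\sigma\to\sigma+\epsilon I$, using continuity together with the hypothesis $\supp(\rho)\subseteq\supp(\sigma)$, and letting $\epsilon\to 0$). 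Then I define the super-operator
$$
\Psi:\bL(\cH)\to\bL(\cH'),\qquad \Psi(X):=\Gamma_{\Phi(\sigma)}^{-1}\bigl(\Phi(\Gamma_\sigma(X))\bigr),
$$
for which $\Psi(\Gamma_\sigma^{-1}(\rho))=\Gamma_{\Phi(\sigma)}^{-1}(\Phi(\rho))$. The goal reduces to showing $\|\Psi\|_{(\alpha,\sigma)\to(\alpha,\Phi(\sigma))}\leq 1$ for every $\alpha\in[1,\infty]$.

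The two endpoint estimates are the heart of the argument and are both elementary. At $p=1$: since $\|Y\|_{1,\tau}=\|\Gamma_\tau(Y)\|_1$, the inequality $\|\Psi(X)\|_{1,\Phi(\sigma)}\leq\|X\|_{1,\sigma}$ is simply $\|\Phi(\Gamma_\sigma(X))\|_1\leq\|\Gamma_\sigma(X)\|_1$, i.e.\ the well-known contractivity of CPTP maps in trace norm. At $p=\infty$: since $\|Y\|_{\infty,\tau}=\|Y\|_\infty$ does not depend on $\tau$, I need $\|\Psi(X)\|_\infty\leq\|X\|_\infty$. By considering hermitian and anti-hermitian parts I may take $X$ self-adjoint with $\|X\|_\infty\leq 1$, so that $-\sigma\leq\Gamma_\sigma(X)\leq\sigma$; applying the positive map $\Phi$ yields $-\Phi(\sigma)\leq\Phi(\Gamma_\sigma(X))\leq\Phi(\sigma)$, which is exactly $-I\leq\Psi(X)\leq I$ after conjugating by $\Phi(\sigma)^{-1/2}$.

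With both endpoint bounds in hand, Theorem~\ref{thm:riesz-thorin} applied with $p_0=q_0=1$, $p_1=q_1=\infty$, weights $\sigma$ on the domain and $\Phi(\sigma)$ on the codomain, and interpolation parameter $\theta=1-1/\alpha$ (so $p_\theta=q_\theta=\alpha$), gives $\|\Psi\|_{(\alpha,\sigma)\to(\alpha,\Phi(\sigma))}\leq 1^{1-\theta}\cdot 1^\theta=1$. Taking $X=\Gamma_\sigma^{-1}(\rho)$ and then $\alpha'\log$ of both sides yields the desired inequality. The step I expect to be most delicate is not any single estimate, but the reduction itself: one must correctly guess the super-operator $\Psi$ (essentially the Heisenberg picture weighted by $\sigma$ and $\Phi(\sigma)$) whose operator norm happens to equal $1$ at both endpoints and whose action on $\Gamma_\sigma^{-1}(\rho)$ produces $\Gamma_{\Phi(\sigma)}^{-1}(\Phi(\rho))$; the remainder of the proof is then a clean interpolation.
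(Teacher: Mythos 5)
Your overall strategy is exactly the paper's: introduce $\Psi=\Gamma_{\Phi(\sigma)}^{-1}\circ\Phi\circ\Gamma_\sigma$, observe $\Psi(\Gamma_\sigma^{-1}(\rho))=\Gamma_{\Phi(\sigma)}^{-1}(\Phi(\rho))$, reduce to $\|\Psi\|_{(\alpha,\sigma)\to(\alpha,\Phi(\sigma))}\leq 1$, and obtain this from the endpoints $\alpha=1,\infty$ via Theorem~\ref{thm:riesz-thorin}. The $\alpha=1$ endpoint and the interpolation step are fine, and your explicit $\epsilon$-perturbation remark for non-invertible $\sigma$ is a reasonable addition.

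There is, however, a genuine gap at the $\alpha=\infty$ endpoint. You claim that ``by considering hermitian and anti-hermitian parts I may take $X$ self-adjoint.'' This reduction does not work at norm $1$: writing $X=H+iK$ with $H,K$ hermitian gives only $\|\Psi(X)\|_\infty\leq\|\Psi(H)\|_\infty+\|\Psi(K)\|_\infty\leq 2\|X\|_\infty$, so your argument as written yields $\|\Psi\|_{(\infty,\sigma)\to(\infty,\Phi(\sigma))}\leq 2$ and hence, after interpolation, only $\|\Psi\|_{(\alpha,\sigma)\to(\alpha,\Phi(\sigma))}\leq 2^{\theta}$, which is not enough. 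The statement that the $\infty\to\infty$ norm of a positive map is attained on hermitian elements (indeed at the identity) is precisely the Russo--Dye theorem, which is what the paper invokes (Corollary 2.3.8 of \cite{Bhatia-P}); it is a genuine theorem, not a consequence of the naive decomposition. Two clean fixes: (i) cite Russo--Dye, so that $\|\Psi\|_{\infty\to\infty}=\|\Psi(I)\|_\infty=\|\Phi(\sigma)^{-1/2}\Phi(\sigma)\Phi(\sigma)^{-1/2}\|_\infty\leq1$; or (ii) use the fact, stated in Section~\ref{sec:Riesz-Thorin} of the paper and cited to \cite{Watrous,Audenaert}, that the norm \eqref{eq:sup-norm-operator} of a completely positive map (such as $\Psi$) is attained at a positive semi-definite $X$, for which your chain $0\leq X\leq I\Rightarrow 0\leq\Gamma_\sigma(X)\leq\sigma\Rightarrow 0\leq\Psi(X)\leq I$ is valid. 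With either repair the proof is complete and coincides with the paper's.
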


\begin{proof} There is nothing to prove when $D_{\alpha}(\rho||\sigma)=\infty$. So let us assume that $\supp(\rho)\subseteq \supp(\sigma)$. 

Since $\alpha>1$ and the logarithm function is increasing, \eqref{eq:data-processing} is equivalent to 
$$\| \Gamma_{\sigma}^{-1}(\rho) \|_{\alpha, \sigma} \geq \| \Gamma_{\Phi(\sigma)}^{-1}(\Phi(\rho))  \|_{\alpha, \Phi(\sigma)}.$$
Observe that 
$$\Gamma_{\Phi(\sigma)}^{-1}(\Phi(\rho)) = \Gamma_{\Phi(\sigma)}^{-1}\circ \Phi \circ \Gamma_{\sigma} \left( \Gamma_\sigma^{-1}(\rho)  \right).$$
As a result,
$$ \| \Gamma_{\Phi(\sigma)}^{-1}(\Phi(\rho))  \|_{\alpha, \Phi(\sigma)} \leq \|\Gamma_{\Phi(\sigma)}^{-1}\circ \Phi \circ \Gamma_{\sigma}\|_{(\alpha, \sigma)\rightarrow (\alpha, \Phi(\sigma))} \|\Gamma_{\sigma}^{-1}(\rho)\|_{\alpha, \sigma}.$$
Therefore, it is sufficient to prove that 
\begin{align}\label{eq:norm-ineq}
\|\Gamma_{\Phi(\sigma)}^{-1}\circ \Phi \circ \Gamma_{\sigma}\|_{(\alpha, \sigma)\rightarrow (\alpha, \Phi(\sigma))}\leq 1.
\end{align}
Employing Riesz-Thorin theorem (Theorem~\ref{thm:riesz-thorin})we only need to prove this for $\alpha=1$ and $\alpha=\infty$. 

For $\alpha=1$ we have $\|\Gamma_{\Phi(\sigma)}^{-1}\circ \Phi \circ \Gamma_{\sigma}\|_{(1, \sigma)\rightarrow (1, \Phi(\sigma))} = \|\Phi\|_{1\rightarrow 1}=1$ because $\Phi$ is completely-positive and trace preserving. 

For $\alpha=\infty$ we have $\|\Gamma_{\Phi(\sigma)}^{-1}\circ \Phi \circ \Gamma_{\sigma}\|_{(\infty, \sigma)\rightarrow (\infty, \Phi(\sigma))} = \|\Gamma_{\Phi(\sigma)}^{-1}\circ \Phi \circ \Gamma_{\sigma}\|_{\infty\rightarrow \infty}$. On the other hand $\Gamma_{\Phi(\sigma)}^{-1}\circ \Phi \circ \Gamma_{\sigma}$ is a positive map, then by the Russo-Dye theorem (Corollary 2.3.8 of \cite{Bhatia-P})
we have
$$\|\Gamma_{\Phi(\sigma)}^{-1}\circ \Phi \circ \Gamma_{\sigma}\|_{\infty\rightarrow \infty} = \|\Gamma_{\Phi(\sigma)}^{-1}\circ \Phi \circ \Gamma_{\sigma} (I)\|_{\infty} = \|\Gamma_{\Phi(\sigma)}^{-1}\circ \Phi ({\sigma})\|_{\infty} = \|\Phi(\sigma)^{-\frac{1}{2}} \Phi(\sigma)\Phi(\sigma)^{-\frac1 2}\|=1.$$
We are done.

\end{proof}

Here we should mentioned that inequality~\eqref{eq:norm-ineq} has also been proven in \cite{BD} and has other consequences in quantum information theory.

\begin{theorem}\label{thm:monotonicity}\emph{(Monotonicity in $\alpha$)} For all density matrices $\rho, \sigma$, 
the function $\alpha\mapsto D_{\alpha}(\rho||\sigma)$ is increasing for $\alpha>1$.
\end{theorem}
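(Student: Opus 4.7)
The plan is to exploit the log-convexity of the $L_{p,\sigma}$-norm packaged in Corollary~\ref{corol:theta}, combined with the positivity statement from Theorem~\ref{thm:positive}. First I restrict to $\supp(\sigma)$: if $\supp(\rho)\not\subseteq\supp(\sigma)$ then $D_\alpha(\rho||\sigma)=+\infty$ for every $\alpha>1$ and monotonicity is vacuous; otherwise we may assume $\sigma$ is positive definite on its support. Setting $X:=\sigma^{-1/2}\rho\sigma^{-1/2}$, the identities~\eqref{eq:QRD-norm-1}--\eqref{eq:norm-eq} together with the substitution $u=1/\alpha$ (so that $\alpha'=1/(1-u)$) recast the divergence as
\[
D_\alpha(\rho||\sigma)\;=\;\alpha'\log\|X\|_{\alpha,\sigma}\;=\;\frac{h(u)}{1-u},\qquad h(u):=\log\|X\|_{1/u,\sigma}.
\]

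Three properties of $h$ on $[0,1]$ will suffice. Convexity of $h$ in $u$ is immediate from Corollary~\ref{corol:theta} applied with the constant map $F(z)=X$, which is precisely the log-convexity of $\|X\|_{p,\sigma}$ in the variable $1/p$. At $u=1$ we have $\|X\|_{1,\sigma}=\|\sigma^{1/2}X\sigma^{1/2}\|_1=\|\rho\|_1=1$, hence $h(1)=0$. Finally, $h(u)\geq 0$ on $(0,1)$ is exactly the positivity half of Theorem~\ref{thm:positive} in the range $\alpha>1$.

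At this point the claim reduces to an elementary convex-analysis fact: for any convex $h$ on $(0,1]$ with $h(1)=0$, the ratio $\phi(u):=h(u)/(1-u)$ is non-increasing on $(0,1)$. Indeed, for $0<u_1<u_2<1$, writing $u_2=\lambda u_1+(1-\lambda)\cdot 1$ with $\lambda=(1-u_2)/(1-u_1)$, convexity yields $h(u_2)\leq\lambda h(u_1)+(1-\lambda)h(1)=\lambda h(u_1)$, equivalently $\phi(u_2)\leq\phi(u_1)$. Translating back, $1<\alpha_1<\alpha_2$ corresponds to $u_1>u_2$, so $D_{\alpha_1}(\rho||\sigma)\leq D_{\alpha_2}(\rho||\sigma)$.

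No step presents a genuine obstacle here: the analytic content is fully absorbed into Corollary~\ref{corol:theta}, and everything else is one-variable calculus. The only choice that matters is the parametrization $u=1/\alpha$, which aligns the log-convexity of the $L_{p,\sigma}$-norm with the $1/(1-u)$ prefactor, so that the whole argument collapses to a single secant-slope comparison anchored at the endpoint $u=1$ where both $h$ and the numerator in the definition of $D_\alpha$ vanish.
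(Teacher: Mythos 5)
Your argument is correct and is essentially the paper's proof in different clothing: the secant-slope comparison of $h(u)=\log\|X\|_{1/u,\sigma}$ anchored at $u=1$ is exactly the application of Corollary~\ref{corol:theta} with $p_0=1$, $p_1=\beta$, $p_\theta=\alpha$ used in the paper, with your $\lambda$ playing the role of the paper's $\theta$ and the normalization $\|X\|_{1,\sigma}=\|\rho\|_1=1$ serving as the same anchor. (The positivity input $h\geq 0$ from Theorem~\ref{thm:positive} is never actually used in your convexity lemma, so that hypothesis can be dropped.)
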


\begin{proof}
Again using the monotonicity of the logarithm function it suffices to prove that for $1<\alpha<\beta$ we have
$$\|\Gamma_{\sigma}^{-1}(\rho)\|_{\alpha, \sigma}^{\alpha'} \leq \|\Gamma_{\sigma}^{-1}(\rho)\|_{\beta, \sigma}^{\beta'}.$$
Note that this inequality is stronger that the statement of the theorem and gives the monotonicity of $\alpha \mapsto \exp(D_{\alpha}(\rho||\sigma))$. 

Since $1<\alpha<\beta$ there exists $0<\theta <1$ such that 
\begin{align}\label{eq:theta-ab}
\frac{1}{\alpha} = (1-\theta) + \frac{\theta}{\beta}.
\end{align} 
Then by Corollary~\ref{corol:theta} we have
$$\|\Gamma_{\sigma}^{-1}(\rho)\|_{\alpha, \sigma} \leq \|\Gamma_{\sigma}^{-1}(\rho) \|_{1, \sigma}^{1-\theta} \|\Gamma_{\sigma}^{-1}(\rho)\|_{\beta, \sigma}^\theta.$$
On the other hand $\|\Gamma_{\sigma}^{-1}(\rho) \|_{1, \sigma} = \|\rho\|_1=1$. Therefore, by raising both sides to the power of $\alpha'$ we arrive at 
$$\|\Gamma_{\sigma}^{-1}(\rho)\|_{\alpha, \sigma}^{\alpha'} \leq  \|\Gamma_{\sigma}^{-1}(\rho)\|_{\beta, \sigma}^{\theta\alpha'}.$$
The proof is finished by observing that \eqref{eq:theta-ab} implies that $\theta\alpha'=\beta'$.

\end{proof}

In the above proof we use Corollary~\ref{corol:theta} for $p_0=1$, $p_1=\beta$ and $p_\theta=\alpha$ to show monotonicity in $\alpha$. Assuming that $p_0>1$ is arbitrary, and following the same proof we obtain some convexity property of $\alpha$-R\'enyi divergence.

\begin{theorem} Let $1\leq \alpha< \beta<\gamma$ and define $\theta$ by
$$\frac{1}{\beta} = \frac{1-\theta}{\alpha} + \frac{\theta}{\gamma}.$$
Then for every $\rho, \sigma$ we have
$$\frac{1}{\beta'}D_{\beta}(\rho\| \sigma) \leq \frac{(1-\theta)}{\alpha'}D_{\alpha}(\rho\|\sigma) + \frac{\theta }{\gamma'} D_{\gamma}(\rho\| \sigma).$$
Equivalently, the function $1/\alpha\mapsto D_{\alpha}(\rho\| \sigma)/\alpha'$ is convex.
\end{theorem}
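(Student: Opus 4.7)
The plan is to reuse the interpolation argument from the proof of Theorem~\ref{thm:monotonicity}, but with a generic left endpoint $\alpha$ in place of $1$. I would apply Corollary~\ref{corol:theta} directly to $X=\Gamma_\sigma^{-1}(\rho)$ with the triple $(p_0,p_\theta,p_1)=(\alpha,\beta,\gamma)$; the hypothesis $\frac{1}{\beta}=\frac{1-\theta}{\alpha}+\frac{\theta}{\gamma}$ is exactly condition~\eqref{eq:p-theta}, so the corollary produces the three-point log-convexity bound
$$\|\Gamma_\sigma^{-1}(\rho)\|_{\beta,\sigma}\;\leq\;\|\Gamma_\sigma^{-1}(\rho)\|_{\alpha,\sigma}^{1-\theta}\,\|\Gamma_\sigma^{-1}(\rho)\|_{\gamma,\sigma}^{\theta}.$$

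Next I would take logarithms and translate back to R\'enyi divergences via the identity $D_p(\rho\|\sigma)=p'\log\|\Gamma_\sigma^{-1}(\rho)\|_{p,\sigma}$, which is immediate from \eqref{eq:QRD-norm-1} combined with \eqref{eq:norm-eq}; the bound above then becomes
$$\frac{1}{\beta'}D_\beta(\rho\|\sigma)\;\leq\;\frac{1-\theta}{\alpha'}D_\alpha(\rho\|\sigma)+\frac{\theta}{\gamma'}D_\gamma(\rho\|\sigma),$$
which is the claimed inequality. For the equivalent convexity statement, one simply reads off the coefficients: $1-\theta$ and $\theta$ are by construction the weights expressing $1/\beta$ as a convex combination of $1/\alpha$ and $1/\gamma$, so setting $F(x):=D_{1/x}(\rho\|\sigma)/(1/x)'$ with $x=1/p$, the inequality reads $F((1-\theta)x_0+\theta x_1)\leq(1-\theta)F(x_0)+\theta F(x_1)$ for $x_0=1/\alpha$ and $x_1=1/\gamma$, i.e.\ convexity of the map $1/\alpha\mapsto D_\alpha(\rho\|\sigma)/\alpha'$.

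There is essentially no obstacle; the whole argument is a change-of-variables restatement of Corollary~\ref{corol:theta}, and unlike the proof of Theorem~\ref{thm:monotonicity} no auxiliary normalization such as $\|\rho\|_1=1$ is needed here because both non-trivial endpoints lie strictly above $1$. The only piece of book-keeping is the positive-definiteness hypothesis on $\sigma$ in that corollary: if $\supp(\rho)\subseteq\supp(\sigma)$ one may restrict to $\supp(\sigma)$ and invoke the convention $\sigma^{-1}=\sigma^{-1}|_{\supp(\sigma)}$ fixed in Section~\ref{sec:holder}, while if $\supp(\rho)\not\subseteq\supp(\sigma)$ all three divergences in the inequality are $+\infty$ and the statement is vacuous.
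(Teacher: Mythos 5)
Your proposal is correct and is exactly the argument the paper intends: the paper itself presents this theorem as the direct generalization of the monotonicity proof obtained by taking $p_0=\alpha$, $p_\theta=\beta$, $p_1=\gamma$ in Corollary~\ref{corol:theta}, which is precisely what you do. Your handling of the support issues and the vacuous infinite case is a sensible (and welcome) addition of detail that the paper leaves implicit.
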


We now prove the duality property of quantum conditional R\'enyi entropy.

\begin{theorem}\label{thm:duality}\emph{(Duality)} For all $1/2\leq  \alpha, \beta \leq \infty$, $\alpha,\beta\neq 1$, such that $\frac{1}{\alpha} + \frac{1}{\beta}=2$, and all tripartite \emph{pure} states $\rho_{ABC}$ we have $H_{\alpha}(A|B) = -H_{\beta}(A|C)$.
\end{theorem}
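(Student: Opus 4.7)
The plan is to reduce $H_\alpha(A|B) = -H_\beta(A|C)$ to an equality of a minimax and a maximin of a single bilinear functional, and to invoke Sion's theorem. Assume without loss of generality $\alpha > 1$, hence $\beta < 1$. The hypothesis $\frac{1}{\alpha}+\frac{1}{\beta}=2$ is equivalent to $\alpha'+\beta'=0$, so the sandwich exponent $s := \frac{1}{2\alpha'} = -\frac{1}{2\beta'}$ is positive and identical on both sides. Using the norm form~\eqref{eq:QRD-norm-1} and passing through the logarithms (with signs canceling via $\alpha'=-\beta'$), the claim collapses to the norm identity
\begin{equation*}
\inf_{\sigma_B}\bigl\|\sigma_B^{-s}\rho_{AB}\sigma_B^{-s}\bigr\|_\alpha \;=\; \sup_{\tau_C}\bigl\|\tau_C^{s}\rho_{AC}\tau_C^{s}\bigr\|_\beta,
\end{equation*}
with $\sigma_B,\tau_C$ ranging over density operators.

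Next, I would exploit the purity of $\rho_{ABC}$ to rewrite each side in terms of $\rho_{BC}$. For any $\sigma_B>0$, the operator $\sigma_B^{-s}\rho_{AB}\sigma_B^{-s}$ is the $AB$-marginal of the unnormalized pure state $|\phi\rangle:=(I_A\otimes\sigma_B^{-s}\otimes I_C)|\psi\rangle$, whose complementary $C$-marginal equals $\tr_B[(\sigma_B^{-2s}\otimes I_C)\rho_{BC}]$ (by cyclicity of the $B$-trace). Since the two marginals of a pure state have the same nonzero spectrum, their Schatten norms coincide. An identical reduction on the other side rewrites the identity equivalently as
\begin{equation*}
\inf_{\sigma_B}\bigl\|\tr_B[(\sigma_B^{-2s}\otimes I_C)\rho_{BC}]\bigr\|_\alpha \;=\; \sup_{\tau_C}\bigl\|\tr_C[(I_B\otimes\tau_C^{2s})\rho_{BC}]\bigr\|_\beta.
\end{equation*}

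Now I would dualize both norms via H\"older. On the left ($\alpha>1$), \eqref{eq:holder-dual} yields a supremum over $V_C\ge 0$ with $\|V_C\|_{\alpha'}=1$; on the right ($0<\beta<1$), the reverse H\"older identity \eqref{eq:inverse-holder-duality} yields an infimum over $W_B\ge 0$ with $\|W_B\|_{\beta'}=1$. In each case the partial trace is absorbed by doubling up the tensor. After the reparametrizations $U_B:=\sigma_B^{-2s}$ and $V_C:=\tau_C^{2s}$, the density constraints become $\tr U_B^{-\alpha'}=1$ and $\tr V_C^{\alpha'}=1$ (using $2s=1/\alpha'$); and since $\beta'=-\alpha'$, the constraint $\|W_B\|_{\beta'}=1$ also becomes $\tr W_B^{-\alpha'}=1$, matching that on $U_B$. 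Both sides thus become, respectively, $\inf_U\sup_V f(U,V)$ and $\sup_V\inf_U f(U,V)$ over the same pair of feasibility sets, where
\begin{equation*}
f(U,V):=\tr\bigl[(U\otimes V)\rho_{BC}\bigr]
\end{equation*}
is bilinear.

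The final step is Sion's minimax theorem. Bilinearity of $f$ makes the convex/concave hypotheses automatic. The feasibility surfaces $\{\tr U^{-\alpha'}=1\}$ and $\{\tr V^{\alpha'}=1\}$ are not themselves convex, but since $f$ is monotone in each argument with respect to the positive semidefinite order, the infimum of $f(\cdot,V)$ over $\{\tr U^{-\alpha'}=1\}$ coincides with the infimum over the convex sublevel set $\{\tr U^{-\alpha'}\le 1\}$ (which is convex because $U\mapsto\tr U^{-\alpha'}$ is convex), and an analogous convexification works on the $V$ side. Combined with a standard compactness bound on the relaxed feasible sets, Sion's theorem then gives $\inf_U\sup_V f=\sup_V\inf_U f$, concluding the proof. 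I expect the main obstacle to lie precisely in this convexification step: a naive application of Sion on the original non-convex level sets fails, and one must justify the relaxation to convex compact sets using the bilinearity and monotonicity of the objective.
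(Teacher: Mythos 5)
Your proof is correct and follows the same skeleton as the paper's: reduce to a norm identity, use purity of $\rho_{ABC}$ to swap the $A$-marginal for the complementary marginal on $BC$, dualize the $\alpha$-norm via \eqref{eq:holder-dual} and the $\beta$-norm via \eqref{eq:inverse-holder-duality}, and close with Sion's minimax theorem. The one place you genuinely diverge is the verification of Sion's hypotheses. The paper keeps the optimization variables as density matrices $\sigma_B,\tau_C$ -- so the feasible sets are compact and convex for free -- and obtains the needed convexity of $\sigma_B\mapsto f$ and concavity of $\tau_C\mapsto f$ from operator convexity of $\sigma\mapsto\sigma^{-1/\alpha'}$ and operator concavity of $\tau\mapsto\tau^{1/\alpha'}$ (both powers lie in $[-1,0]$ and $[0,1]$ respectively). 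You instead substitute $U=\sigma_B^{-1/\alpha'}$, $V=\tau_C^{1/\alpha'}$ to make the objective bilinear, which trades the convexity of the objective for non-convexity of the constraint sets, and you then must relax to the sublevel sets $\{\tr U^{-\alpha'}\le 1\}$ and $\{\tr V^{\alpha'}\le 1\}$. That relaxation does work, but the clean justification is positive homogeneity rather than monotonicity alone: given $U$ strictly inside the sublevel set, rescaling $U\mapsto cU$ with $c\le 1$ lands on the level set and multiplies $f$ (and $\sup_V f(U,\cdot)$) by $c$, so the infima agree; likewise scaling up on the $V$ side. Monotonicity by itself does not tell you that you can reach the level set from the interior. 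Note also that the relaxed $U$-set is unbounded, but Sion only requires one of the two sets to be compact, and the $\alpha'$-ball on the $V$ side supplies that. So your anticipated ``main obstacle'' is real in your parametrization but is entirely sidestepped by the paper's choice of variables.
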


\begin{proof}
By assumptions one of $\alpha, \beta$ is greater than $1$ and the other is less than $1$. So let us assume that $\alpha>1$ and $\beta<1$. Moreover, $\frac{1}{\alpha} + \frac{1}{\beta}=2$ implies that $\beta'=-\alpha'<0$. Taking these into account and using the monotonicity of the logarithm function, $H_{\alpha}(A|B) = -H_{\beta}(A|C)$ is equivalent to
\begin{align}\label{eq:inf-sup-1}
\inf_{\sigma_{B}} \| \mathcal I_A\otimes  \Gamma_{\sigma_B}^{-\frac{1}{\alpha'}}(\rho_{AB})  \|_\alpha = \sup_{\tau_C}  \| \mathcal I_A\otimes \Gamma_{\tau_C}^{-\frac{1}{\beta'}}(\rho_{AC})  \|_\beta,
\end{align}
where $\mathcal I$ denotes the identity super-operator.

Observe that  
$$\mathcal I_A\otimes  \Gamma_{\sigma_B}^{-\frac{1}{\alpha'}}(\rho_{AB})  = \mathcal I_A\otimes  \Gamma_{\sigma_B}^{-\frac{1}{\alpha'}} \otimes \tr_C (\rho_{ABC}) = \tr_C\left(   \mathcal I_A\otimes  \Gamma_{\sigma_B}^{-\frac{1}{\alpha'}}\otimes \mathcal I_C(\rho_{ABC})    \right).$$
On the other hand $\rho_{ABC}$ and then $ \mathcal I_A\otimes  \Gamma_{\sigma_B}^{-\frac{1}{\alpha'}}\otimes \mathcal I_C(\rho_{ABC})  $ are pure (but not necessarily normalized). As a result the set of eigenvalues (and singular values) of 
$$ \tr_C\left(   \mathcal I_A\otimes  \Gamma_{\sigma_B}^{-\frac{1}{\alpha'}}\otimes \mathcal I_C(\rho_{ABC})\right)  \quad \quad \text{ and }  \quad \quad \tr_{AB}\left(   \mathcal I_A\otimes  \Gamma_{\sigma_B}^{-\frac{1}{\alpha'}}\otimes \mathcal I_C(\rho_{ABC}) \right),$$
are equal, which implies that their $\alpha$-norm coincide. Repeating the same argument with the right hand side of~\eqref{eq:inf-sup-1} we find that it suffices to prove 
\begin{align}\label{eq:inf-sup-2}
\inf_{\sigma_B} \| \left(\tr_B\circ \Gamma_{\sigma_B}^{-\frac 1 {\alpha'}}\right) \otimes \mathcal I_C (\rho_{BC})   \|_{\alpha} = \sup_{\tau_C} \|  \mathcal I_B\otimes \left(  \tr_C\circ \Gamma_{\tau_C}^{- \frac 1{\beta'}}  \right) (\rho_{BC})    \|_{\beta}.
\end{align}

Now using H\"older's duality~\eqref{eq:holder-dual} the left hand side is equal to 
\begin{align*}
\inf_{\sigma_B} \| \left(\tr_B\circ \Gamma_{\sigma_B}^{-\frac 1 {\alpha'}}\right) \otimes \mathcal I_C (\rho_{BC})   \|_{\alpha} &
 = \inf_{\sigma_B} \, \sup_{\tau_C}\,  \tr\left[ \tau_C^{\frac{1}{\alpha'}} \left(\tr_B\circ \Gamma_{\sigma_B}^{-\frac 1 {\alpha'}}\right) \otimes \mathcal I_C (\rho_{BC})   \right] \\
 & = \inf_{\sigma_B} \, \sup_{\tau_C}\,  \tr\left[ I_B\otimes \tau_C^{\frac{1}{\alpha'}} \left( \Gamma_{\sigma_B}^{-\frac 1 {\alpha'}}\otimes \mathcal I_C (\rho_{BC}) \right)   \right] \\
 & = \inf_{\sigma_B} \, \sup_{\tau_C}\,  \tr\left[\rho_{BC} \left(\sigma_B^{-\frac 1{\alpha'}}\otimes \tau_C^{\frac{1}{\alpha'}}   \right) \right].\\
\end{align*}
Here for the first equation we use the fact that in~\eqref{eq:holder-dual} when $X$ is positive semi-definite we may restrict the optimization over positive semi-definite $Y$. Moreover any such $Y$ with $\|Y\|_{\alpha'}=1$ is of the form $Y=\tau^{1/\alpha'}$ where $\tau$ is a density matrix.

Again by repeating the same argument for the right hand side of~\eqref{eq:inf-sup-2} and using~\eqref{eq:inverse-holder-duality} we arrive at
$$\sup_{\tau_C} \|  \mathcal I_B\otimes \left(  \tr_C\circ \Gamma_{\tau_C}^{- \frac 1{\beta'}}  \right) (\rho_{BC})    \|_{\beta}
 = \sup_{\tau_C}\,  \inf_{\sigma_B}\, \tr\left[  \rho_{BC} \left( \sigma_B^{-\frac 1 {\alpha'}}\otimes \tau_C^{\frac 1 {\alpha'}} \right)  \right].$$
Note that here we use $\beta'= - \alpha'$. As a result it suffices to show
\begin{align}\label{eq:inf-sup-3}
\inf_{\sigma_B} \, \sup_{\tau_C}\, f(\sigma_B, \tau_C)  =  \sup_{\tau_C}\,  \inf_{\sigma_B}\, f(\sigma_B, \tau_C),
\end{align}
where 
$$f(\sigma_B, \tau_C)= \tr\left[\rho_{BC} \left(\sigma_B^{-\frac 1{\alpha'}}\otimes \tau_C^{\frac{1}{\alpha'}}   \right) \right].$$

This equation holds due to Sion's minimax theorem \cite{Sion}. The ranges of $\sigma_B, \tau_C$ are compact and convex. Moreover, $\sigma_B\mapsto  f(\sigma_B, \tau_C)$ is convex because $-1\leq -1/{\alpha'}\leq 0$ and $\sigma\mapsto \sigma^{-\frac{1}{\alpha'}}$ is operator convex \cite{Bhatia-M} . Finally, $\tau_C\mapsto f(\sigma_B, \tau_C) $ is concave because $0\leq 1/\alpha'\leq 1$ and $\tau\mapsto \tau^{1/\alpha'}$ is operator monotone and then operator concave \cite{Bhatia-M}.
\end{proof}


\section{$\alpha$-Holevo information is super-additive}\label{sec:chi}

A mutual information type function is also defined in~\cite{Wildeetal}. For a bipartite state $\rho_{AB}$ let
\begin{align}\label{eq:MI}
I_{\alpha}(A; B) = \min_{\sigma_B} D_{\alpha}(\rho_{AB}\| \rho_A\otimes \sigma_B),
\end{align}
where $\rho_A = \tr_B(\rho_{AB})$ and the infimum is taken over all density matrices $\sigma_B$. For $\alpha=1$ it is easy to see that $I_1(A;B)=I(A;B)$ is equal to the usual mutual information. We call $I_\alpha(A;B)$ the $\alpha$-R\'enyi mutual information.

Using notations developed above we have
\begin{align*}
I_{\alpha}(A;B)
& = \alpha'\log\left(\min_{\sigma_B}  \|\Gamma_{\rho_A}^{-\frac{1}{\alpha'}}\otimes \Gamma_{\sigma_B}^{-\frac 1{\alpha'}} (\rho_{AB})  \|_{\alpha}\right).
\end{align*} 
Then using H\"older's duality and following similar steps as in the proof of Theorem~\ref{thm:duality} we obtain the following. 

\begin{theorem} 
Let $\alpha> 1$ and $1/2\leq \beta<1$ such that $\frac{1}{\alpha} + \frac{1}{\beta}=2$. Also 
let $\ket\psi_{ABC}$ be a purification of $\rho_{AB}$. Then we have
\begin{align}\label{eq:max-MI}
I_\alpha(A;B) = \alpha' \log\left(   \max_{\tau_C} \norm{ \left(\text{\rm{tr}}_A \circ \Gamma_{\rho_A}^{-\frac{1}{\alpha'}}\right)\otimes \mathcal I_B\otimes (\text{\rm{tr}}_C\circ \Gamma_{\tau_C}^{\frac{1}{\alpha'}})(\ket \psi\bra \psi_{ABC}) }_{\beta}   \right).
\end{align}

\end{theorem}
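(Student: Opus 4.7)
The plan is to prove this theorem by paralleling the proof of Theorem~\ref{thm:duality}: both the left-hand side (an infimum of an $\alpha$-norm over $\sigma_B$) and the right-hand side (a supremum of a $\beta$-norm over $\tau_C$) will be rewritten, via H\"older's duality combined with the purity of $\ket{\psi}_{ABC}$, as two orderings of the same min-max of a common trilinear expression
\[
g(\sigma_B,\tau_C) := \tr\bigl[\bigl(\rho_A^{-1/\alpha'}\otimes \sigma_B^{-1/\alpha'}\otimes \tau_C^{1/\alpha'}\bigr)\ket{\psi}\bra{\psi}_{ABC}\bigr],
\]
and the proof will then close by Sion's minimax theorem.

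For the left-hand side I would begin with $I_\alpha(A;B)=\alpha'\log\min_{\sigma_B}\|\Gamma_{\rho_A}^{-1/\alpha'}\otimes\Gamma_{\sigma_B}^{-1/\alpha'}(\rho_{AB})\|_\alpha$, substitute $\rho_{AB}=\tr_C\ket{\psi}\bra{\psi}$, and exploit that the operator $(\Gamma_{\rho_A}^{-1/\alpha'}\otimes\Gamma_{\sigma_B}^{-1/\alpha'}\otimes \mathcal I_C)(\ket{\psi}\bra{\psi})$ is rank one (being a congruence of $\ket\psi\bra\psi$ by a positive operator). Its partial traces over $C$ and over $AB$ then share the same non-zero eigenvalues, so the $\alpha$-norm can be moved onto the $C$-system. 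Applying H\"older's duality~\eqref{eq:holder-dual} to this now $C$-supported positive operator, restricting the optimizer to be positive semidefinite and parametrizing it as $\tau_C^{1/\alpha'}$ with $\tau_C$ a density matrix on $C$, and using cyclicity of the trace to migrate the $\Gamma$-maps back onto $\ket\psi\bra\psi$, one arrives at $\min_{\sigma_B}\sup_{\tau_C} g(\sigma_B,\tau_C)$.

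For the right-hand side, the operator inside $\|\cdot\|_\beta$ is the positive operator on $B$ given by $\tr_{AC}[(\Gamma_{\rho_A}^{-1/\alpha'}\otimes\mathcal I_B\otimes\Gamma_{\tau_C}^{1/\alpha'})(\ket{\psi}\bra{\psi})]$. Since $1/2\le\beta<1$, the reverse-H\"older duality~\eqref{eq:inverse-holder-duality} applies; parametrizing its positive optimizer $Y\ge 0$ with $\|Y\|_{\beta'}=1$ as $Y=\sigma_B^{1/\beta'}=\sigma_B^{-1/\alpha'}$ (using $\beta'=-\alpha'$) for a density matrix $\sigma_B$, and again collecting the $\Gamma$-maps via cyclicity, reproduces the same expression $g(\sigma_B,\tau_C)$. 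Hence the right-hand side equals $\max_{\tau_C}\inf_{\sigma_B} g(\sigma_B,\tau_C)$.

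Finally I would invoke Sion's minimax theorem: the map $\sigma\mapsto\sigma^{-1/\alpha'}$ is operator convex for $-1\le -1/\alpha'\le 0$ and $\tau\mapsto \tau^{1/\alpha'}$ is operator concave for $0\le 1/\alpha'\le 1$, so $g$ is convex in $\sigma_B$ and concave in $\tau_C$ on the compact convex domain of density matrices, and the two orderings agree. The main obstacle I expect is careful bookkeeping in the two H\"older-duality steps: one must verify that the density-matrix parametrizations genuinely exhaust all positive optimizers of the correct $\alpha'$-norm (resp.\ $\beta'$-norm) unit ball, and that after migrating the tensor factors of the $\Gamma$-maps across the partial traces, the two different dualities collapse onto the \emph{same} trilinear $g(\sigma_B,\tau_C)$, so that Sion's theorem applies to the identical objective on both sides.
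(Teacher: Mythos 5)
Your proposal is correct and follows essentially the same route the paper intends: the paper itself only sketches this proof by saying it uses H\"older's duality and ``similar steps as in the proof of Theorem~\ref{thm:duality},'' and your argument is a faithful elaboration of exactly that — purity of $\ket\psi_{ABC}$ to move the $\alpha$-norm onto system $C$, H\"older duality and reverse H\"older duality to turn both sides into the two orderings of the same convex--concave function $g(\sigma_B,\tau_C)$, and Sion's minimax theorem to equate them. The bookkeeping points you flag (exhausting the unit balls by $\tau_C^{1/\alpha'}$ and $\sigma_B^{-1/\alpha'}$, and collapsing to a common $g$ via cyclicity) are handled identically in the paper's proof of Theorem~\ref{thm:duality}, so there is no gap relative to the paper's own standard of rigor.
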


\begin{theorem}\label{thm:MI-add}
$\alpha$-R\'enyi mutual information is additive for $\alpha\geq 1$, i.e., for $\rho_{AA'BB'} = \rho_{AB}\otimes \rho'_{A'B'}$ we have
$$I_{\alpha}(AA'; BB')=I_{\alpha}(A;B) + I_{\alpha}(A'; B').$$
\end{theorem}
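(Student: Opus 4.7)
The plan is to prove the two opposing inequalities
$I_\alpha(AA';BB') \le I_\alpha(A;B) + I_\alpha(A';B')$ and
$I_\alpha(AA';BB') \ge I_\alpha(A;B) + I_\alpha(A';B')$
separately, by exploiting the two complementary variational representations of $I_\alpha$ that are now available: the infimum form~\eqref{eq:MI} and the supremum form~\eqref{eq:max-MI}. Throughout, the key structural fact is that the Schatten norm is multiplicative on tensor products, so that sandwiched R\'enyi divergence is additive on product states: using~\eqref{eq:QRD-norm-1} together with $(\sigma_1\otimes\sigma_2)^{-1/(2\alpha')}=\sigma_1^{-1/(2\alpha')}\otimes\sigma_2^{-1/(2\alpha')}$ and $\|X\otimes Y\|_\alpha=\|X\|_\alpha\|Y\|_\alpha$, one obtains
\[
D_\alpha(\rho_1\otimes\rho_2\,\|\,\sigma_1\otimes\sigma_2) = D_\alpha(\rho_1\|\sigma_1) + D_\alpha(\rho_2\|\sigma_2).
\]

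For the $\le$ direction I would simply restrict the infimum in~\eqref{eq:MI} to the product ansatz $\sigma_{BB'}=\sigma_B\otimes\sigma'_{B'}$. Since $\rho_{AA'}=\rho_A\otimes\rho'_{A'}$, the above additivity gives $D_\alpha(\rho_{AB}\otimes\rho'_{A'B'}\,\|\,(\rho_A\otimes\sigma_B)\otimes(\rho'_{A'}\otimes\sigma'_{B'}))$ as a sum, and minimizing each summand independently over $\sigma_B$ and $\sigma'_{B'}$ yields the desired upper bound. This step uses only the infimum form and in particular covers $\alpha=1$ (where one just recovers additivity of von Neumann mutual information).

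For the $\ge$ direction, which is the less routine part, I would use the dual representation~\eqref{eq:max-MI}, valid for $\alpha>1$. Let $\ket\psi_{ABC}$ and $\ket{\psi'}_{A'B'C'}$ be purifications of $\rho_{AB}$ and $\rho'_{A'B'}$, so that $\ket\psi\otimes\ket{\psi'}$ is a purification of $\rho_{AB}\otimes\rho'_{A'B'}$ with purifying system $CC'$. Inserting the product ansatz $\tau_{CC'}=\tau_C\otimes\tau'_{C'}$ into~\eqref{eq:max-MI} for the composite system, and using that $\Gamma_{\rho_A\otimes\rho'_{A'}}^{-1/\alpha'}$ and $\Gamma_{\tau_C\otimes\tau'_{C'}}^{1/\alpha'}$ factorize as tensor products, the operator whose $\beta$-norm appears decouples across the two systems. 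Multiplicativity of the $\beta$-norm then turns the inner quantity into the product of the two single-system quantities, and since the unrestricted supremum over $\tau_{CC'}$ dominates the supremum over product $\tau_{CC'}$, taking $\alpha'\log$ yields the lower bound.

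The main obstacle --- more a bookkeeping point than a real difficulty --- is verifying that every super-operator that appears really respects the tensor structure under the product ansatz: that $\rho_{AA'}=\rho_A\otimes\rho'_{A'}$, that $\Gamma_\cdot^{\pm 1/\alpha'}$ of a tensor product is a tensor product of $\Gamma$'s, and that $\tr_{AA'}=\tr_A\otimes\tr_{A'}$, $\tr_{CC'}=\tr_C\otimes\tr_{C'}$. All of these are immediate. The endpoint $\alpha=1$ either follows by continuity $\alpha\downarrow 1$ or can be noted as the classical additivity of $I(A;B)$.
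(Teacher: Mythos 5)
Your proposal is correct and follows exactly the paper's own two-line argument: the $\le$ direction by restricting the infimum in~\eqref{eq:MI} to product $\sigma_{BB'}$, and the $\ge$ direction by taking a product purification and restricting the supremum in~\eqref{eq:max-MI} to product $\tau_{CC'}$. The details you supply (additivity of $D_\alpha$ on tensor products via multiplicativity of Schatten norms, and the factorization of the $\Gamma$ super-operators) are precisely the bookkeeping the paper leaves implicit.
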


\begin{proof}
For one direction restrict the minimization in~\eqref{eq:MI} to $\sigma_{BB'} = \sigma_{B}\otimes \sigma'_{B'}$.  For the other direction take a product purification $\ket\psi_{ABCA'B'C'}=\ket\psi_{ABC}\otimes \ket{\psi'}_{A'B'C'}$ of $\rho_{AB}\otimes \rho'_{A'B'}$ and restrict the maximization in~\eqref{eq:max-MI} to $\tau_{CC'}=\tau_C\otimes \tau_{C'}$.

\end{proof}

Using this theorem we can now answer an open question raised in~\cite{Wildeetal}. 
For a noisy quantum channel $\cN_{A\rightarrow B}$ define its $\alpha$-Holevo information by
\begin{align}\label{eq:chi-alpha}
\chi_{\alpha}(\cN) = \sup_{\rho_{XA}} I_{\alpha} (X; B),
\end{align}
where the supremum is taken over all classical-quantum (c-q) states $\rho_{XA}$ and $\rho_{XB}=\mathcal I_X\otimes\cN(\rho_{XA})$. Again for $\alpha=1$ this quantity reduces to the Holevo information ($\chi_1(\cN)=\chi(\cN)$).

\begin{theorem}\label{thm:chi-super-additive}
$\alpha$-Holevo information is super-additive for $\alpha\geq 1$, i.e., for two quantum channels $\cN_{A\rightarrow B}$ and $\cN'_{A'\rightarrow B'}$ we have
$$\chi_{\alpha}(\cN\otimes \cN')\geq \chi_{\alpha}(\cN)+ \chi_{\alpha}(\cN').$$
\end{theorem}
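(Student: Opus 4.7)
The plan is to exploit the additivity of $\alpha$-R\'enyi mutual information on product states (Theorem~\ref{thm:MI-add}) by feeding the channel $\mathcal{N}\otimes\mathcal{N}'$ with a product classical-quantum input. Since the supremum defining $\chi_\alpha(\mathcal{N}\otimes\mathcal{N}')$ ranges over \emph{all} c-q inputs, not just product ones, a lower bound from any product input suffices.

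Concretely, I would pick (near-)optimal c-q states $\rho_{XA}=\sum_{x}p_x \ket{x}\bra{x}_X\otimes \rho_A^x$ for $\chi_\alpha(\mathcal{N})$ and $\rho'_{X'A'}=\sum_{x'}p'_{x'}\ket{x'}\bra{x'}_{X'}\otimes \rho_{A'}^{x'}$ for $\chi_\alpha(\mathcal{N}')$. Their tensor product
\[
\rho_{XX'AA'}=\rho_{XA}\otimes\rho'_{X'A'}=\sum_{x,x'}p_xp'_{x'}\,\ket{x,x'}\bra{x,x'}_{XX'}\otimes \rho_A^x\otimes\rho_{A'}^{x'}
\]
is a valid c-q state with classical register $XX'$ and quantum register $AA'$, so it is admissible in the supremum~\eqref{eq:chi-alpha} for $\mathcal{N}\otimes\mathcal{N}'$. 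Pushing it through the two channels gives the product output state $\rho_{XB}\otimes\rho'_{X'B'}$, where $\rho_{XB}=(\mathcal{I}_X\otimes\mathcal{N})(\rho_{XA})$ and $\rho'_{X'B'}=(\mathcal{I}_{X'}\otimes\mathcal{N}')(\rho'_{X'A'})$.

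Now by Theorem~\ref{thm:MI-add} applied to this product state,
\[
I_\alpha(XX';BB')_{\rho_{XB}\otimes\rho'_{X'B'}}=I_\alpha(X;B)_{\rho_{XB}}+I_\alpha(X';B')_{\rho'_{X'B'}}.
\]
Since the left-hand side is bounded above by $\chi_\alpha(\mathcal{N}\otimes\mathcal{N}')$ by definition, taking the supremum independently over $\rho_{XA}$ and $\rho'_{X'A'}$ on the right yields the claimed super-additivity $\chi_\alpha(\mathcal{N}\otimes\mathcal{N}')\geq \chi_\alpha(\mathcal{N})+\chi_\alpha(\mathcal{N}')$.

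There is essentially no obstacle here: all the real work is in Theorem~\ref{thm:MI-add}, whose ``hard'' direction (additivity) is itself a quick consequence of the dual max-expression~\eqref{eq:max-MI} restricted to product auxiliaries $\tau_{CC'}=\tau_C\otimes\tau_{C'}$. The only minor point worth noting is that the product of two c-q states is again c-q (with product classical distribution), so the test state used above is legitimately in the domain of the supremum defining $\chi_\alpha(\mathcal{N}\otimes\mathcal{N}')$; no joint or entangled input is ever needed for the lower bound.
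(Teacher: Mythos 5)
Your proposal is correct and is essentially identical to the paper's proof: restrict the supremum defining $\chi_{\alpha}(\cN\otimes\cN')$ to product c-q inputs and invoke the additivity of $\alpha$-R\'enyi mutual information (Theorem~\ref{thm:MI-add}). The paper states this in one line; you have merely spelled out the same argument in more detail.
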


\begin{proof} In the definition of $\chi_{\alpha}(\cN\otimes \cN')$ restrict the supremum to tensor product states $\rho_{XA}\otimes \rho'_{X'A'}$, and use the additivity of $\alpha$-R\'enyi mutual information.

\end{proof}

\noindent\textbf{Acknowledgements.} 
The author is thankful to Payam Delgosha and the unknown referee whose comments improved the readability of the paper.
This research was in part supported by National Elites Foundation and by a grant
from IPM (No. 91810409).



\begin{references}



\bibitem{Chernoff} K. M. R. Audenaert, M. Nussbaum, A. Szko{\l}a, F. Verstraete, Asymptotic Error Rates in Quantum Hypothesis Testing, Comm. Math. Phys. 279, 251-283 (2008).

\bibitem{MosonyiHiai}  M. Mosonyi and F. Hiai, On the Quantum R\'enyi Relative Entropies and Related Capacity Formulas,
IEEE Transactions on Information Theory 57, 2474-2487, (2011).


\bibitem{Wildeetal} M. Wilde, A. Winter, and D. Yang, Strong converse for the classical capacity of entanglement-breaking channels, arXiv:1306.1586 (2013). 

\bibitem{Lennertetal} M. M\"uller-Lennert, F. Dupuis, O. Szehr, S. Fehr, and M. Tomamichel, On quantum R\'enyi entropies: a new definition, some properties and several conjectures, arXiv:1306.3142v1 (2013). 

\bibitem{TMarco} M. Tomamichel, Smooth entropies--a tutorial: With focus on applications in cryptography, Tutorial at QCRYPT 2012, slides available at \url{http://2012.qcrypt.net/docs/slides/Marco.pdf}, September 2012.


\bibitem{Fehr} S. Fehr, On the conditional R\'enyi entropy, Lecture at the Beyond IID Workshop at the University of Cambridge, January 2013.



\bibitem{SharmaWarsi} N. Sharma and N. A. Warsi, On the strong converses for the quantum channel capacity theorems, arXiv:1205.1712 (2012).

\bibitem{PV} Y. Polyanskiy and S. Verd\'u, Arimoto channel coding converse and R\'enyi divergence, in Proc. 48th Allerton Conf. Comm. Cont. Comp., (Monticello, USA), Sept. 2010.

\bibitem{FrankLieb} R. L. Frank, E. H. Lieb, Monotonicity of a relative R\'enyi entropy, arXiv:1306.5358 (2013).

\bibitem{updated} M. M\"uller-Lennert, F. Dupuis, O. Szehr, S. Fehr, and M. Tomamichel, On quantum R\'enyi entropies: a new definition, some properties and several conjectures, arXiv:1306.3142v2 (2013). 

\bibitem{Bhatia-M} R. Bhatia, Matrix Analysis, Springer (2010).


\bibitem{BerghL} J. Bergh and J. L\"ofstr\"om, Interpolation Spaces,  Springer-Verlag, New York (1976).

\bibitem{Xu} Q. Xu, Operator Spaces and Noncommutative $L_p$, Lecture in the summer school on Banach spaces and Operator spaces, Nankai University, China, 2007.

\bibitem{Lunardi} A. Lunardi, An introduction to interpolation theory,  February 2007, available at http://www.math.unipr.it/~lunardi/LectureNotes.html.


\bibitem{PWPR} D. P\'erez-Garc\'ia, M. M. Wolf, D. Petz, and M. B. Ruskai, Contractivity of positive and trace-preserving maps under $L_p$ norms, J. Math. Phys. 47, 083506 (2006).

\bibitem{BD} P. Delgosha and S. Beigi, Impossibility of Local State Transformation via Hypercontractivity, arXiv:1307.2747 (2013).


\bibitem{ReedSimon} M. Reed and B. Simon, Methods of modern mathematical physics, Volume 2: Fourier analysis, self-adjointness, Elsevier (1975).


\bibitem{PisierXu} G. Pisier and Q. Xu, Non-commutative $L_p$-spaces, In Handbook of the geometry of Banach spaces 2, 1459-1517. North-Holland, Amsterdam (2003).

\bibitem{RicardXu} \'E. Ricard and Q. Xu, Complex interpolation of weighted noncommutative $L_p$-spaces, Houston journal of mathematics 37, 1165-1179 (2011).

\bibitem{Conde} C. Conde, Geometric interpolation in $p$-Schatten class, Journal of Mathematical Analysis and Applications 340,  920-931 (2008).


\bibitem{Watrous} J. Watrous, Notes on super-operator norms induced by Schatten norms, 
Quant. Inf. Comput. 5, 57-67 (2005)


\bibitem{Audenaert} K. M. R. Audenaert, A Note on the $p\rightarrow q$ norms of Completely Positive Maps, Lin. Alg. Appl. 430(4), 1436-1440 (2009).



\bibitem{Bhatia-P} R. Bhatia, Positive Definite Matrices, Princeton University Press (2007).

\bibitem{Sion} M. Sion, On general minimax theorems, Pac. J. Math. 8 (1958) 171-176.

\end{references}
\end{document}